\title{A Quantum Walk Enhanced Grover Search Algorithm for Global Optimization
%\thanks{This work was supported by the Society for Industrial and Applied Mathematics}
} 
\author{Yan Wang
\thanks{Georgia Institute of Technology, Atlanta, Georgia. 
(\email{yan.wang@me.gatech.edu})}}
\begin{document}
\maketitle
\slugger{mms}{xxxx}{xx}{x}{x--x}%slugger should be set to mms, siap, sicomp, sicon, sidma, sima, simax, sinum, siopt, sisc, or sirev

%\linenumbers

\begin{abstract}
One of the significant breakthroughs in quantum computation is Grover's algorithm for unsorted database search.  Recently, the applications of Grover's algorithm to solve global optimization problems have been demonstrated, where unknown optimum solutions are found by iteratively improving the threshold value for the selective phase shift operator in Grover rotation. In this paper, a hybrid approach that combines continuous-time quantum walks with Grover search is proposed so that the search is accelerated with improved threshold values. By taking advantage of the quantum tunneling effect, better threshold values can be found at the early stage of the search process so that the sharpness of probability improves. The results between the new algorithm, existing Grover search, and classical heuristic algorithms are compared. 

\end{abstract}

\begin{keywords}Quantum computation, \and Grover search, \and Quantum walk, \and Global optimization\end{keywords}

\begin{AMS}90C30, 68Q99, 81S25\end{AMS}

\pagestyle{myheadings}
\thispagestyle{plain}
\markboth{QUANTUM WALK ENHANCED GROVER OPTIMIZATION}{Y. WANG}

\section{Introduction}  \label{sec:intro}

The potential of quantum computation to solve scientific and engineering problems has been recognized in the past decade. The power of quantum computers is in both time and space efficiency. The major exciting breakthroughs include the discovery of Shor's algorithm \cite{shor1994algorithms} that factors integers in polynomial times which is exponentially faster than any of the previously known classical ones, and Grover's algorithm \cite{grover1996fast} for unsorted database search which has the quadratic speedup. In locating one out of $N$ items, Grover's search algorithm requires only $O(\sqrt{N})$ iterations of a so-called Grover rotation that consists of a selective phase shift operator and a Grover operator. In general, to locate one of $m$ solutions out of a total of $N$ possible items if $m$ is known, the upper bound for the number of Grover rotations is $\lceil(\pi/4)\sqrt{N/m}\rceil$.

Recently, the applications and extensions of Grover's algorithm to solve global optimization problems have been demonstrated. D\"{u}rr and H{\o}yer \cite{durr1996quantum} first applied Grover's algorithm in optimization by randomly selecting a possible solution, using its functional evaluation as the threshold in the selective phase shift operator, and applying a certain number of Grover rotations for each optimum search iteration. The number of Grover rotations is increased gradually based on the upper bound of Grover search with unknown number of solutions \cite{boyer1998tight}. Bulger et al. \cite{bulger2003implementing} took an adaptive search strategy to change the number of Grover rotations per iteration dynamically, where the number of Grover rotations is also randomly sampled between zero and the incremental limit. Baritompa et al. \cite{baritompa2005grover} developed a further improved adaptive algorithm where the number of Grover rotations for each iteration is determined by a strategy of maximizing the benefit-cost ratio as the expected value gain to the number of rotations. A sequence of rotation numbers was also generated to heuristically implement the strategy. Bulger \cite{bulger2007combining} combined Grover's search algorithm with local search techniques where Grover's algorithm is only used to locate the basin that possibly contains the global optimum solution.  Liu and Koehler \cite{liu2010using} provided a different strategy where Bayesian update is applied to determine the benefit-cost ratio. 
%Liu and Koehler \cite{liu2012hybrid} also further improved the computational efficiency by taking only one Grover rotation at the early stage of search when the threshold is far from the global optimum and at least one fourth of all possible solutions have better functional evaluations than the threshold, in which case the selective phase shift operator takes $\cos^{-1}(1/9)$ instead of $\pi$ as in the classical Grover search.

There is another category of quantum algorithms to solve global optimization problems, called adiabatic quantum optimization or quantum annealing \cite{farhi2001quantum,santoro2002theory,reichardt2004quantum,santoro2006optimization}. Based on the objective function, a Hamiltonian $H_{p}$ corresponding to the ground state of the quantum system is constructed. In quantum mechanics, a Hamiltonian is an operator corresponding to the total energy of a system. From an initial state with the Hamiltonian $H_{i}$, the quantum system with the linearly interpolated Hamiltonian $H(s)=(1-s)H_{i}+sH_{p}$ evolves toward the ground state with the gradual change of $s$ from $0$ to $1$, according to Schr\"{o}dinger's equation. If the evolution is slow enough, the system will reach the ground state and the solution of the optimization problem can be found. Compared to the temperature-based simulated annealing where the temperature of the system gradually reduces to search for global minimum, quantum annealing takes the advantage of quantum tunneling effect and tends to outperform simulated annealing. Yet the disadvantages of quantum annealing include the requirement of slow evolution and the possibility of being trapped in local minima.

Recently, a new approach \cite{wang2014global} that combines the Grover search with quantum walks was proposed to quickly improve the threshold functional value in Grover's algorithm. Other Grover methods for global optimization only considered the improvement of computational efficiency by optimizing the number of Grover rotations. There is yet another aspect of the search efficiency, which is the threshold functional value. The threshold is important in convergence speed because it determines the number of solutions $m$ out of a total of $N$ possibilities in the discretized solution space. That is, there are $m$ solutions of which the functional evaluations are better than the threshold value. If $m$ is large, the magnitude of amplitude and thus the probability of finding the optimum will not be `sharp', and the sampling of threshold is not effective in finding the actual optimum. By taking the advantage of quantum tunneling, this new approach introduces a quantum walk mechanism so as to increase the `sharpness' of probability distribution at the early stage of search. 

In this new approach, quantum walks replace the Grover rotations when the number of rotations is low during the iterative searching process. When the number of Grover rotations is low, the quantum measurement is based on an almost uniform distribution. Therefore, the chance that the threshold functional value is improved is low. In contrast, quantum walks can result in a probability distribution according to the objective function, where the probability of finding a better threshold value is higher. Quantum walk can be considered as a quantum version of the classical random walk, where a
stochastic system is modeled in terms of probability amplitudes instead of probabilities. In
random walk, the system's state $\mathbf{x}$ at time $t$ is described by a probability distribution
$p(\mathbf{x},t)$. The system evolves by transitions. The state distribution after a time period of
$\tau$ is $p(\mathbf{x},t+{\tau})=T(\tau)p(\mathbf{x},t)$ where $T(\tau)$ is the transition
operator. In quantum walk, the system's state is described by the complex-valued amplitude
$\psi(\mathbf{x},t)$. Its relationship with the probability is $\psi^{*}\psi=|\psi|^2=p$. The
system evolution then is modeled by the quantum walk
$\psi(\mathbf{x},t+{\tau})=U(\tau)\psi(\mathbf{x},t)$ with $U$ being a unitary and reversible operator. In quantum walks, probability is replaced by amplitude and Markovian dynamics is replaced by unitary dynamics.

Similar to random walks, there are discrete-time quantum walks and continuous-time quantum walks. The study of discrete-time quantum walks started from 1990s \cite{Meyer96,AmbainisBNVW01} in the context of quantum algorithm and computation \cite{Kempe03,Kendon07,Konno08}. Although the term, continuous-time quantum walk, was introduced more recently \cite{FarhiGutmann1998}, the research of the topic can be traced back much earlier in studying the dynamics of quantum systems, particularly in the path integral formulation of quantum mechanics generalized by Feynman \cite{Feynman1948} in 1940's. The relationship between the discrete- and continuous-time quantum walks was also studied. The two models have similar speed performance and intrinsic relationships. The convergence of discrete-time quantum walks toward continuous-time quantum walks has been demonstrated \cite{Strauch2006,Childs2010}. 

Here, we take the continuous-time quantum walk approach to introduce tunneling. This extra step helps accelerate the Grover search for global optima by increasing the sampling probabilities of the global optimum states through quantum accelerated diffusion. In this paper, this hybrid optimization approach is described in details, particularly the choice of time step in continuous-time quantum walks for the consideration of efficiency and the evaluation of objective functions on quantum computers. In the remainder of the paper, the continuous-time quantum walk formulation is first introduced in Section \ref{sec:ctqw}. The new global optimization algorithm that combines quantum walk and Grover search will be presented in Section \ref{sec:algorithm}. The computational study that simulates the quantum algorithm on the conventional computer will be described in Section \ref{sec:implementation}.

\section{Continuous-time quantum walk}  \label{sec:ctqw}

The dynamics of quantum systems is described by Schr\"{o}dinger's equation 
\begin{equation}
 i\frac{d}{dt}\psi(\mathbf{x},t)=H(t)\psi(\mathbf{x},t)
\end{equation}
where $H(t)$ is the Hamiltonian and $i=\sqrt{-1}$. Continuous-time quantum walks in one-dimensional (1-D) space can be formulated to model the quantum drift-diffusion process, described by
\begin{equation}
i\frac{\partial}{\partial{t}}\psi(x,t)=-\frac{b}{2}\frac{\partial^2}{\partial{x^2}}\psi(x,t) -iV(x,t)\psi(x,t) 
\end{equation} 
where $b$ is the diffusion coefficient and $V(x,t)$ is the potential function. Assuming that a minimization problem $\min_{x} f(x)$ is to be solved, we then have $V(x)=f(x)$. In the context of optimization, searching optima in high-dimensional solution space can be easily converted to 1-D diffusion problem with state mapping. Work has been done for high-dimensional discrete-time \cite{inui2004localization,gonulol2009decoherence} and continuous-time quantum walks \cite{wang2013simulating,wang2016accelerating}.

Path integral is a classical approach to solve the quantum dynamics problem. To construct the unitary operator $U$ that describes quantum state transitions, a general \emph{functional integral} \cite{farhi1992functional}
\begin{equation} \label{eq:F_jk}
 F_{jk}:=\int{ dq_{jk}e^{-i\int_{t_{0}}^{t_{0}+\tau}{W_{q(s)}ds}}\prod_{l \rightarrow m}e^{i\theta_{ml}} }
\end{equation}
for a path from state $x_{k}$ to state $x_{j}$ is applied. Here, $dq_{jk}$ is the probabilistic measure on the path from $x_{k}$ to $x_{j}$, which is analogous to continuous-time Markov chain model. A \emph{path} $q(s)$ is defined as a functional mapping from time $s$ to the state space. For instance, $q(t_0)=x_k$ and $q(t_0+\tau)=x_j$ represent the transitional path from state $x_{k}$ to state $x_{j}$ during a time period of $\tau$. The
overall probability of all possible paths is given as $\int_{t_{0}}^{t_0+\tau}{W_{q(s)}ds}$ from $x_{k}$ at time $t_{0}$ to $x_{j}$ at time $t_{0}+\tau$. In Eq.(\ref{eq:F_jk}), 
$e^{-i\int_{t_{0}}^{t_0+\tau}{W_{q(s)}ds}}$ can be regarded as the \emph{weight} of transition from $x_{k}$ to $x_{j}$, and
$\prod_{l \rightarrow m}e^{i\theta_{ml}}$ is the \emph{total phase shift factor} for all jumps in
transition from $x_{k}$ to $x_{j}$, where each of $e^{i\theta_{ml}}$ corresponds to the phase shift for one
of the jumps during the transition. 

Similar to the classical Chapman-Kolmogorov equation of state transitions, a transition rate from state $x_{k}$ to state $x_{j}$ at time $t$ in terms of probability amplitude is
\begin{equation}
 \rho_{jk}e^{i\theta_{jk}}:=-i\langle x_{j} | H(t) | x_{k} \rangle
\end{equation}
where $\rho_{jk}$ is the magnitude of transition rate and $\theta_{jk}$ is the phase. Then the magnitude of leaving state $x_{k}$ is
\begin{equation}
 \rho_{k}:=\sum_{k\neq j}\rho_{jk}
\end{equation}
and the overall transition rate for state $x_{k}$ is determined by
\begin{equation}
 W_{k}:= \langle x_{k} | H(t) | x_{k} \rangle + i\rho_{k}
\end{equation}

The elements of the Hamiltonian matrix $\hat{H}$ for 1-D lattice space that has integer indices and the spacing $\Delta$ are given by
\begin{equation} \label{eq:elementHamiltonDriftDiff}
 \langle j|\hat{H}|k\rangle
=-\frac{b}{2\Delta^2}\delta_{j,k-1}+(\frac{b}{\Delta^2}-iV_{k})\delta_{j,k}-\frac{b}{2\Delta^2}
\delta_{j,k+1}
\end{equation}
where $\delta_{j,k}$ is the Kronecker delta, and the states are simply denoted by integers as $x=\ldots,-2,-1,0,1,2,\ldots$. 

For a transitional path with $k \neq j$, 
\begin{equation*}
\begin{split}
 \rho_{jk}&= \frac{b}{2\Delta^2}[\delta_{j,k-1}+\delta_{j,k+1}] \\
 e^{i\theta_{jk}}&=i \\
 \rho_{k}&=\rho_{k-1,k}+\rho_{k+1,k}=\frac{b}{\Delta^2} \\
 W_{k}&=\frac{b}{\Delta^2}-iV_{k}+i\frac{b}{\Delta^2} 
\end{split}
\end{equation*}

\subsection{Functional integral}

Consider that the 1-D transitional paths are memoryless and the transition rate is
$b/(2\Delta^2)$ per unit time. The numbers of jumps to the left or right direction
within a time period follows a Poisson distribution. That is, the probability that there are $l$
jumps to the left for time $\tau$ is $e^{-b\tau/(2\Delta^2)}(b\tau/(2\Delta^2))^{l}/{l!} $.
Similarly it is $e^{-b\tau/(2\Delta^2)}(b\tau/(2\Delta^2))^{r}/{r!} $ for $r$ jumps to the
right. Assuming the final state is at $n$ steps away and on the right to the initial state, $r-l=n$.
The probabilistic measure $dq_{jk}$ in Eq.(\ref{eq:F_jk}) for one path from state $0$ to $n\Delta$ that has
$l$ left jumps is
\begin{equation}  \label{eq:dq_n0}
dq_{n,0}^{(l)}=\frac{e^{-b\tau/(2\Delta^2)}(b\tau/(2\Delta^2))^{l}}{l!}\frac{e^{-b\tau/(2\Delta^2)}
(b\tau/(2\Delta^2))^{n+l}}{(n+l)!} 
\end{equation}

For a transition with $n$ steps away from the initial state for a total period $\tau$, the weight in the functional integral can be calculated as 
\begin{equation*}
e^{-i\sum_{l}{W_{l}\tau_{l}}}=e^{-i\sum_{l}{[\frac{b}{\Delta^2}+i(\frac{b}{\Delta^2}-V_{l})]\tau_{l}}}
\approx e^{(1-i)\frac{b}{\Delta^2}\tau-V_{n}\tau}
\end{equation*}
where $V_{n}$ denotes the potential at the final state and $\sum_{l}\tau_{l}=\tau$. With the probabilistic measure as in Eq.(\ref{eq:dq_n0}), the functional integral for quantum drift-diffusion processes in Eq.(\ref{eq:F_jk}) becomes
\begin{equation}   \label{eq:F_n0_quantumDD}
%\begin{split}
 F_{n,0}=\sum_{l=0}^{\infty} [dq_{n,0}^{(l)}e^{-i\tau (1+i)b/\Delta^2-V_{n}\tau}(-1)^{l}i^{n}] 
% \\
%&= \sum_{l=0}^{\infty}
%e^{-b\tau/\Delta^2}\frac{(\frac{b\tau}{{2\Delta^2}})^{2l+n}}{l!(n+l)!} e^{-i\tau(1+i)b/\Delta^2-V_{n}\tau}(-1)^{l}i^{n} \\
%&=i^{n}e^{-ib\tau/\Delta^2-V_{n}\tau}\sum_{l=0}^{\infty}
%\frac{(-1)^{l}(\frac{b\tau}{{2\Delta^2}})^{2l+n}}{l!(n+l)!}  \\
= i^{n}e^{-ib\tau/\Delta^2-V_{n}\tau}J_{n}(\frac{b\tau}{\Delta^2})
%\end{split}
\end{equation}
where $J_{n}(y)$ is the \emph{Bessel function of first kind} with integer order $n$ and
input $y$ ($y \geq 0$). Additionally, $J_{-n}(y)=(-1)^{n}J_{n}(y)$.

Based on Eq.(\ref{eq:F_n0_quantumDD}), the elements of the unitary quantum walk operator $U=(u_{jk})_{N \times N}$ are updated as $u_{jk}=F_{(j-k),0}$ for the given space resolution $\Delta$ and time resolution $\tau$. 

\subsection{Choice of time step $\tau$}

Compared to random walk, the power of quantum walk lies in its capability of capturing the long-range spatial correlation and thus the tunneling effect. This is largely due to the Bessel function. Given the amplitudes $\psi(t)$ associated with all states at time $t$, one step of quantum walk will yield $\psi(t+\tau)$ with the $j^{th}$ element ($j=1,\ldots,N$) updated by 
\begin{equation*}
\psi_{j}(t+\tau)=\sum_{k}F_{(j-k),0}\psi_{k}(t)
\end{equation*}
Consider that the system starts at state $K$ with $\psi_{K}(t)=1.0$ and $\psi_{k\neq K}(t)=0.0$, where $K$ is any index between $1$ and $N$. The $j^{th}$ element is then updated to
\begin{equation*}
\psi_{j}(t+\tau)=F_{(j-K),0}=i^{(j-K)}e^{-ib\tau/\Delta^2-V_{j}\tau}J_{(j-K)}(\frac{b\tau}{\Delta^2})
\end{equation*}
The corresponding updated probability that state $j$ is observed is
\begin{equation}   \label{eq:probStateUpdate}
\Pr(x=j)=\psi_{j}^{*}(t+\tau)\psi_{j}(t+\tau)=e^{-2V_{j}\tau}J_{(j-K)}^{2}(\frac{b\tau}{\Delta^2})
\end{equation}
With Eq.(\ref{eq:probStateUpdate}), the probability of arriving certain state can be adjusted by selecting appropriate time step $\tau$ and diffusion coefficient $b$. Given that the Bessel functions of the first kind $J_{n}$'s are continuous and oscillatory with values between $-1$ and $1$, $\Pr(x=j)$ in Eq.(\ref{eq:probStateUpdate}) has the local maximum values where $\tau$ satisfies $\partial{J_{(j-K)}}/\partial{\tau}=0$ with fixed $b$ and $\Delta$. 

The first derivatives of Bessel function $J_{n}(z)$'s with respect to $z$ can be derived and calculated recursively, as 
%From the recurrence forms of Bessel function
%\begin{equation*}
%\frac{d}{dz}[z^{n}J_{n}(z)]=z^{n}J_{n-1}(z)
%\end{equation*}
%and
%\begin{equation*}
%\frac{d}{dz}[z^{-n}J_{n}(z)]=-z^{-n}J_{n+1}(z)
%\end{equation*}
%we can further receive
%\begin{equation}  \label{eq:recurrenceJn1}
%nz^{-1}J_{n}(z)+J_{n}'(z)=J_{n-1}(z)
%\end{equation}
%and
%\begin{equation}  \label{eq:recurrenceJn2}
%-nz^{-1}J_{n}(z)+J_{n}'(z)=-J_{n+1}(z)
%\end{equation}
%respectively. The summation of Eqs.(\ref{eq:recurrenceJn1}) and (\ref{eq:recurrenceJn2}) yields
\begin{equation}  \label{eq:zero_deriv_Jn}
J_{n}'(z)=\frac{1}{2}[J_{n-1}(z)-J_{n+1}(z)]
\end{equation}
%Additionally, from Eq.(\ref{eq:recurrenceJn2}) 
%\begin{equation*}
with $J_{0}'(z)=-J_{1}(z)$.
%\end{equation*}

The zeros of $J_{n}'(z)$'s determine where the local maximum probabilities in Eq.(\ref{eq:probStateUpdate}) are obtained with the oscillatory pattern, as shown in Figure \ref{fig:BesselFunc}. Notice that the coefficient $e^{-2V_{j}\tau}$ in Eq.(\ref{eq:probStateUpdate}) adds the modulation effect of the potential or objective function onto the probability distribution. As a result, the states with lower energy levels tend to have higher probability values. In addition, the zeros of $J_{n}(z)$'s determine where the probabilities become zeros and no samples will be drawn from those states. 

The advantage of continuous-time quantum walks is that the choice of time step $\tau$ can be arbitrarily long, and there is no need to keep it short. The choice of time step $\tau$ affects the probability value in Eq.(\ref{eq:probStateUpdate}). The temptation is to choose $\tau$ to be as large as possible such that the quantum walk can span over the major portion of state space. However, a balanced approach should be taken, because $\sum_{n}J_{n}^2(z)=1$ regardless of $z$. That is, the overall oscillatory amplitudes are reduced if large $z$'s are taken, and the advantage of introducing quantum walk over the uniform initial sampling in other Grover approaches \cite{baritompa2005grover,liu2010using} may diminish. If the global optimum solution is known a priori within some particular regions in the solution space, the time step size can be tailored so that the quantum walk do not under- or over-shoot and the regions are fully covered and without overestimation. 

Eq.(\ref{eq:zero_deriv_Jn}) also indicates that the maximum probability for a spatial walking step size $n$ is achieved at the time step where the probabilities for the spatial step size $n-1$ and $n+1$ are the same, if the effect of potential is not considered. This gives a unique pattern of spatial-temporal relationship for quantum walks. It is seen in Eq.(\ref{eq:probStateUpdate}) that the PDF is quadratically more sensitive to spatial resolution than to temporal resolution because of $J_{n}(b\tau/\Delta^2)$. A variation in $\Delta$ results in a more prominent change of PDF than a variation in $\tau$.

%
%\begin{table}
%\caption{Example solutions of $J_{n}'(z)=0$}
%\label{tab:Jnp_zeros}     
%\begin{tabular}{llllll}
%\hline\hline\noalign{\smallskip}
%n=0 & 3.83170597 & 7.01558667 & 10.17346814 & 13.32369194 & 16.47063005 \\
%n=1 & 1.84118378 & 5.33144277 & 8.53631637 & 11.7060049 & 14.86358863 \\
%n=2 & 3.05423693 & 6.70613319 & 9.96946782 & 13.17037086 & 16.34752232 \\
%n=3 & 4.20118894 & 8.0152366 & 11.34592431 & 14.58584829 & 17.78874787 \\
%\hline\hline\noalign{\smallskip}
%\end{tabular}
%\end{table}

\begin{figure}
\includegraphics[width=0.65\textwidth]{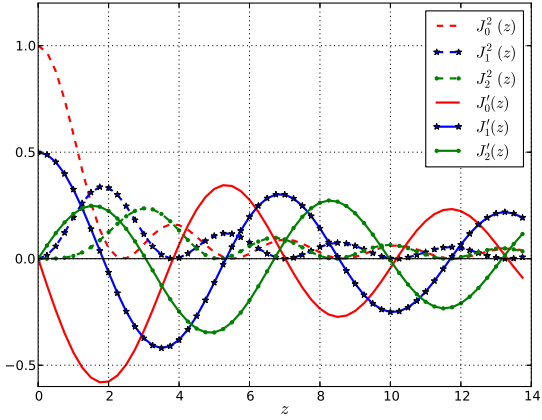} 
\caption{The first derivative of Bessel functions $J_n'$, in comparison with $J_n^2$}
\label{fig:BesselFunc}   
\end{figure}

To solve $\min f(x)$ by the optimization methods based on Grover search, if there are $m$ solutions out of a total of $N$ possible ones such that $f$ is less than a threshold value $c$, then the probability of finding a better functional evaluation after $r$ Grover rotations is $\Pr(f_{r}(x)<c)=\sin^{2}[(2r+1)\arcsin{\sqrt{m/N}}]$ \cite{boyer1998tight}. 
The sampling efficiency of quantum walk with a chosen $\tau$ is stated as follows.

\begin{theorem}
If there is a $J_{(L)}({b\tau}/{\Delta^2})$ such that $J_{(L)}^2({b\tau}/{\Delta^2}) \leq J_{(j)}^2({b\tau}/{\Delta^2})$ ($ \forall j \in \{j|V_{j}<c\}$) with diffusion coefficient $b$, spatial resolution $\Delta$, and time step $\tau$, and
%\begin{equation*}
$ \sqrt{m}e^{-c\tau}|J_{(L)}({b\tau}/{\Delta^2})| > \sin[(2r+1)\arcsin\sqrt{m/N}] $, 
%\end{equation*} 
quantum walk search is more efficient than the Grover search with the probability of finding $m$ solutions out of $N$ possible ones with the threshold value $c$ based on $r$ rotations.
\end{theorem}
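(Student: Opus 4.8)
The plan is to compare the probability that a single quantum-walk step lands on a below-threshold state with the probability that $r$ Grover rotations do, and to show the former strictly dominates under the two hypotheses. Throughout, ``finding a solution'' means sampling a state $j$ with $V_j < c$, of which there are $m$; and ``more efficient'' is read in the sense that the quantum-walk step attains a strictly larger success probability. Recall that the Grover success probability stated just before the theorem is $\Pr(f_r(x)<c)=\sin^2[(2r+1)\arcsin\sqrt{m/N}]$.

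First I would write the quantum-walk success probability by summing the per-state probabilities of Eq.(\ref{eq:probStateUpdate}), taking the walk to start from the concentrated state $K$ so that $\psi_K=1$:
\[
P_{\mathrm{QW}} \;=\; \sum_{j:\,V_j<c} e^{-2V_j\tau}\,J_{(j-K)}^2\!\left(\tfrac{b\tau}{\Delta^2}\right),
\]
where I read the theorem's index $(j)$ as the spatial displacement $j-K$ from the starting state.

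Next I would bound this sum from below term by term. Since $V_j<c$ for every solution state and $\tau>0$, each modulation factor satisfies $e^{-2V_j\tau} \ge e^{-2c\tau}$, and by the first hypothesis $J_{(j-K)}^2 \ge J_{(L)}^2$ on the solution set. Summing these $m$ common lower bounds yields
\[
P_{\mathrm{QW}} \;\ge\; m\,e^{-2c\tau}\,J_{(L)}^2\!\left(\tfrac{b\tau}{\Delta^2}\right) \;=\; \left(\sqrt{m}\,e^{-c\tau}\,\bigl|J_{(L)}(\tfrac{b\tau}{\Delta^2})\bigr|\right)^2 .
\]
The second hypothesis is precisely the statement that the quantity inside this square exceeds $\sin[(2r+1)\arcsin\sqrt{m/N}]$, so squaring it gives $P_{\mathrm{QW}} \ge (\sqrt{m}\,e^{-c\tau}|J_{(L)}|)^2 > \sin^2[(2r+1)\arcsin\sqrt{m/N}] = \Pr(f_r(x)<c)$, which is the desired strict domination.

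I expect the only delicate point to be the squaring step. Passing from $\sqrt{m}\,e^{-c\tau}|J_{(L)}| > \sin[(2r+1)\arcsin\sqrt{m/N}]$ to the inequality between the squares is legitimate only when the right-hand side is nonnegative, i.e. $\sin[(2r+1)\arcsin\sqrt{m/N}]\ge 0$; a positive number exceeding a negative one need not have the larger square. This is exactly the regime the algorithm targets, since quantum walks are invoked when the rotation count $r$ is small and the argument $(2r+1)\arcsin\sqrt{m/N}$ still lies in the first quadrant where the sine is positive. I would therefore state this positivity as the operative assumption, and note in passing that the modulation bound tacitly uses $\tau>0$ and that the strictness of the final conclusion is supplied by the second hypothesis rather than by the term-by-term estimate.
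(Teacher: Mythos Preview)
Your argument is correct and follows essentially the same route as the paper: write the quantum-walk success probability as the sum $\sum_{V_j<c} e^{-2V_j\tau}J_{(j)}^2(b\tau/\Delta^2)$, bound it below by $m\,e^{-2c\tau}J_{(L)}^2$, and compare with the Grover success probability $\sin^2[(2r+1)\arcsin\sqrt{m/N}]$. Your treatment is in fact slightly more careful than the paper's, which separately records $\sum_{V_j<c}e^{-2V_j\tau}\ge m e^{-2c\tau}$ and the Bessel minorization before combining them, and does not pause over the positivity of the sine needed to square the second hypothesis.
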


\begin{proof}
From Eq.(\ref{eq:probStateUpdate}), the probability of finding a better evaluation than $c$ after one iteration of quantum walk is $\sum_{V_{j}<c}e^{-2V_{j}\tau}J_{(j)}^{2}({b\tau}/{\Delta^2})$. In order to ensure that quantum walk can locate a better threshold value, we need $\sum_{V_{j}<c}e^{-2V_{j}\tau}J_{(j)}^{2}({b\tau}/{\Delta^2}) > \sin^{2}[(2r+1)\arcsin{\sqrt{m/N}}]$. Given that $\sum_{V_{j}<c}e^{-2V_{j}\tau} \geq me^{-2c\tau}$ and if we can find a $J_{(L)}({b\tau}/{\Delta^2})$ such that $J_{(L)}^2({b\tau}/{\Delta^2}) \leq J_{(j)}^2({b\tau}/{\Delta^2})$ for all $j$ such that $V_{j}<c$, then 
\begin{equation*}
\sqrt{\sum_{V_{j}<c}e^{-2V_{j}\tau}J_{(j)}^{2}({b\tau}/{\Delta^2})} \geq \sqrt{m}e^{-c\tau}|J_{(L)}({b\tau}/{\Delta^2})|
\end{equation*}
\end{proof}

\section{The new global optimization algorithm}  \label{sec:algorithm}

The proposed algorithm starts with one iteration of continuous-time quantum walk so that the probabilities of states are distributed according to the objective function, where the minimum solutions have higher sampling probabilities during quantum measurement. 

The state-of-the-art Grover optimization algorithm is the heuristic Grover optimization algorithm \cite{baritompa2005grover,liu2010using}, where the number of Grover rotations in the iterations takes the predetermined and static sequence 
\begin{align}
Rc=(0,0, 0, 0, 1, 1, 0, 1, 1, 2, 1, 2, 3, 1, 4, 5, 1, 6, 2, 7, 9, 11, 13, 16,5, 20,  \notag \\
24, 28, 34, 2, 41,49, 4, 60, 72, 9, 88, 105, 125, 3, 149, 22, 183, 219)   
 \notag
\end{align}
That is, there is no Grover rotation in the first four iterations of search. At the beginning of each iteration in Grover search, a \emph{Hadamard} operation is applied to generate a uniform distribution in the state space. Therefore, a uniform sampling is taken in each of the first four iterations to decide the functional threshold value $c$, one rotation is taken in the fifth iteration, and so on. 

Our experiments also showed the above heuristic Grover optimization algorithm performs well. However, it is seen that the measurements at the early search process are dependent on almost uniform samplings. Our proposed algorithm replaces these small number of Grover rotations with quantum walks. That is, based on a predetermined \emph{rotation threshold value} $R_0$, if the rotation number is not greater than $R_0$, we use one step of quantum walk instead of Grover rotations. After a number of Grover rotations, a new sample is drawn from the resulting amplitude. If the functional evaluation has improved, then the new value will be used as the updated threshold $c$ for the next iteration of the Grover search. The iteration continues until certain stop criteria are met. 

%The new algorithm is listed in Table \ref{tab:qwGroverAlgorithm}. 

%In the proposed algorithm, the number of Grover rotations in each search iteration is selected to be larger than the other Grover search methods as the state of the art, denoted as the BBW-LK method \cite{baritompa2005grover,liu2010using,liu2012hybrid}. Notice that the number of rotation is proportional to $\sqrt{N/m}$ where $m$ is the number of solutions. Improved threshold objective values by quantum walks can help reduce $m$.  Therefore the number of rotations can increase accordingly. The difference will be shown in the examples in Section \ref{sec:implementation}. 

The new algorithm is listed as Algorithm \ref{tab:qwGroverAlgorithm}. The position of the initial state $x_{0}$ can be either randomly or deterministically selected with its amplitude as one.  As the search starts, one step of quantum walk is performed. The first measurement is obtained by sampling from the resulted distribution and the value is set to be the threshold $c$. During the iteration, the number of Grover rotations is based on the static sequence, except for those iterations where the number of rotations is less than $R_0$, in which case one step of quantum walk is taken instead of Grover rotations.

\begin{algorithm}
\caption{The quantum walk enhanced Grover search algorithm for minimization problems}
\label{tab:qwGroverAlgorithm}
\begin{center}
%\textbf{QW\_Grover\_Minimization\_Search()}  \\
%\emph{Input}: objective function $V(x)$, diffusion coefficient $b(t)$, \\
%        time step $\tau$, simulation time $T$  \\
%\emph{Output}: optimum solution $x^{*}$, optimum value $c$ \\
%\hline
\begin{algorithmic}[1]
\State $Rc=[0,0, 0, 0, 1, 1, 0, 1, 1, 2, 1, 2, 3, 1, 4, 5, 1, 6, 2, 7, 9, 11,13, 16, 5, 20,24, 28, 34, 2, 41,$
\State $\;\;\;\;\;\;\;\;\;\; 49, 4, 60, 72, 9, 88, 105, 125, 3, 149, 22, 183, 219]$
\State choose $\tau$, $R_0$, $b(t)$; 
\State calculate $\Delta$ based on the number of qubits and search domain; 
\State $t \gets 0$; $i \gets 0$;
\State initialize $\psi(x_{0})=1.0$ at a selected position $x_{0}$; 
\State Compute $U=F(\tau,\Delta,b(t),V)$ by Eq.(\ref{eq:F_n0_quantumDD}); 
\State $|\psi \rangle=U|\psi \rangle$;
\Comment{perform one iteration of quantum walk to find initial solution $x^{*}$} 
\State randomly sample an $x^{*}$ based on probability distribution $\psi^{2}(x)$ as quantum measurement; 
\State initialize threshold value $c=V(x^{*})$; 
\While {$i<$MAX-ITER and stop criteria not met} 
\Comment{main iterations of search}
\State $R=Rc[i]$; 
\State $i=i+1$;
\If {$R \leq R_0$}
\State initialize $\psi(x_{i})=1.0$ at a selected position $x_{i}$; 
\State Compute $U=F(\tau,\Delta,b(t),V)$ by Eq.(\ref{eq:F_n0_quantumDD}); 
\State $|\psi \rangle=U|\psi \rangle$;
\Comment{perform one iteration of quantum walk} 
\Else 
\State initialize $\psi(x)$ as a uniform distribution by the Hadamard transform; 
\For {$r=1$ to $R$}
\Comment{perform $R$ steps of Grover rotations}
\State apply Grover rotation operator to $\psi(x)$; 
\EndFor 
\EndIf
\State randomly sample an $x_{0}$ based on probability distribution $\psi^{2}(x)$ as quantum measurement; 
\If {$V(x_{0})<V(x^{*})$}  
\State $c \gets V(x_{0})$;
\Comment{update the threshold} 
\State $x^{*} \gets x_{0}$; 
%\Else  
%\State $count=count+1$; 
%\Comment{count the number of no improvement} 
\EndIf
%\If {$count>NoImprovementLowLimit$}
%\State $R=(\pi/4)\sqrt{N}$; 
%\Comment{increase the number of rotations}
%\EndIf
%\If {$count>NoImprovementHighLimit$}
%\State update $U=F(\tau,\Delta,b(t),V)$ by Eq.(\ref{eq:F_n0_quantumDD}); 
%\State $|\psi \rangle=U|\psi \rangle$;
%\Comment{perform quantum walk} 
%\State randomly sample an $x_{0}$ based on probability distribution $\psi^{2}(x)$; 
%\If {$V(x_{0})<V(x^{*})$}
%\State  $c=V(x_{0})$;
%\State  $x^{*}=x_{0}$;
%\EndIf
%\State $count=0$; 
%\EndIf
\State $t=t+\tau$; 
\EndWhile 

\end{algorithmic}
\end{center}
\end{algorithm}

\subsection{Sampling efficiency of the new algorithm}

Here we would like to show that the sampling of functional value in threshold update is more efficient in the new algorithm than the unform sampling in classical Grover optimization algorithm.

\begin{lemma}  \label{lem:1}
In solving $\min_{x\in\Omega} f(x)$, the solution sampled based on the amplitude $\psi_{QW}(x)$ resulted from quantum walk operator has better expected value than the one based on the amplitude $\psi_{H}(x)$ resulted from Hadamard operator, i.e. $\mathbb{E}_{\psi_{QW}^2}[f]<\mathbb{E}_{\psi_{H}^2}[f]$.
\end{lemma}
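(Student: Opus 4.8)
The plan is to make the comparison explicit and then exploit the monotone, Boltzmann-type modulation $e^{-2f_j\tau}$ that the quantum walk imprints on the amplitudes. First I would write out both expectations. The Hadamard step yields the uniform law $\psi_H^2(x_j)=1/N$, so $\mathbb{E}_{\psi_H^2}[f]=\frac{1}{N}\sum_{j=1}^N f_j$. From Eq.(\ref{eq:probStateUpdate}), one step of quantum walk started from index $K$ produces, after normalization of the non-unitary absorbing dynamics, $\psi_{QW}^2(x_j)\propto e^{-2f_j\tau}J_{(j-K)}^2(b\tau/\Delta^2)$, so that
\begin{equation*}
\mathbb{E}_{\psi_{QW}^2}[f]=\frac{\sum_j f_j\,e^{-2f_j\tau}J_{(j-K)}^2(b\tau/\Delta^2)}{\sum_j e^{-2f_j\tau}J_{(j-K)}^2(b\tau/\Delta^2)}.
\end{equation*}

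The key observation is that this is an exponential tilting of a reference distribution by the strictly decreasing factor $e^{-2f_j\tau}$ (recall $\tau>0$). I would control the tilting with a correlation inequality. The cleanest engine is Chebyshev's sum inequality, equivalently the covariance inequality: ordering the states by increasing $f_j$, the sequence $f_j$ is nondecreasing while $e^{-2f_j\tau}$ is nonincreasing, so the two are oppositely sorted and their covariance against the reference measure is nonpositive, strictly negative whenever $f$ is nonconstant. An equivalent and more transparent formulation introduces $\beta=2\tau$ and differentiates the tilted family $p_j(\beta)\propto w_j e^{-\beta f_j}$, giving $\frac{d}{d\beta}\mathbb{E}_{p(\beta)}[f]=-\mathrm{Var}_{p(\beta)}[f]\le 0$, so the tilted expectation of $f$ strictly decreases as $\beta$ grows from $0$. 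Applied with $w_j$ equal to the uniform weight, this immediately yields $\mathbb{E}_{\psi_{QW}^2}[f]<\mathbb{E}_{\psi_H^2}[f]$ for the pure modulation factor, which is the heart of the claim.

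The main obstacle is reconciling the Bessel weights $J_{(j-K)}^2(b\tau/\Delta^2)$ with the uniform reference. The tilting argument with $w_j=J_{(j-K)}^2$ cleanly delivers $\mathbb{E}_{\psi_{QW}^2}[f]<\mathbb{E}_{w}[f]$, where $\mathbb{E}_w$ is the expectation against the pure-diffusion Bessel law, which is itself a genuine probability distribution since $\sum_n J_n^2(z)=1$. Reaching the uniform baseline $\mathbb{E}_{\psi_H^2}[f]$ then requires controlling the residual gap $\mathbb{E}_w[f]-\mathbb{E}_H[f]$, i.e. ensuring that the oscillatory Bessel envelope centered at the start index $K$ does not by itself bias the local average of $f$ upward. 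I would close this either by averaging over a uniformly or symmetrically chosen start index $K$, so that the Bessel weights average out to the uniform law, or by applying the weighted covariance inequality directly with the full weight $e^{-2f_j\tau}J_{(j-K)}^2$ relative to the uniform measure and checking that this combined weight remains negatively correlated with $f$. This reconciliation is the delicate step; the exponential-tilting inequality that powers it is routine.
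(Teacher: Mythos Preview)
Your approach matches the paper's in spirit: both rest on the observation that the factor $e^{-2f_j\tau}$ in the quantum-walk density reweights states with smaller $f$ upward, and this is what drives $\mathbb{E}_{\psi_{QW}^2}[f]$ below the uniform mean. The paper's own proof is in fact much shorter and more heuristic than yours: it simply writes down the expression for $\mathbb{E}_{\psi_{QW}^2}[f]$, notes that ``with $e^{-2f(j)\tau}$ in the probability density, solutions with smaller $f(x)$'s correspond to higher probabilities,'' and concludes directly that the expected value is smaller than the uniform average $\frac{1}{N}\sum_j f(j)$.

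Your covariance/exponential-tilting argument is a genuine strengthening of this: the paper never invokes Chebyshev's sum inequality or the identity $\frac{d}{d\beta}\mathbb{E}_{p(\beta)}[f]=-\mathrm{Var}_{p(\beta)}[f]$, it just asserts the monotonicity. More importantly, the ``main obstacle'' you isolate --- that the Bessel weights $J_{(j-K)}^2$ are not uniform, so the tilting argument only delivers $\mathbb{E}_{\psi_{QW}^2}[f]<\mathbb{E}_{w}[f]$ against the Bessel reference and not immediately against the Hadamard uniform law --- is real, and the paper's proof does not address it at all. The paper effectively treats the exponential factor as the whole story and ignores any bias the Bessel envelope might introduce. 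So your reconciliation step (averaging over the start index $K$, or checking the combined weight is negatively correlated with $f$) is not something the paper supplies; you have been more careful than the published argument on precisely this point.
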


\begin{proof}
Suppose $|\Omega|=N$ in a discretized space. From Eq.(\ref{eq:probStateUpdate}), the expected value after one iteration of quantum walk is 
$$
\mathbb{E}_{\psi_{QW}^2}[f]= \sum_{j=1}^{N}f(j)e^{-2f(j)\tau}J_{(j-K)}^2(\frac{b\tau}{\Delta^2})/\sum_{j=1}^{N}e^{-2f(j)\tau}J_{(j-K)}^2(\frac{b\tau}{\Delta^2})
$$
for a $N$-qubit system. With $e^{-2f(j)\tau}$ in the probability density, solutions with smaller $f(x)$'s correspond to higher probabilities. Therefore, the expected value $\mathbb{E}_{\psi_{QW}^2}[f]$ is smaller than  $\mathbb{E}_{\psi_{H}^2}[f]=\sum_{j=1}^{N}f(j)/N$ based on the amplitude $\psi_{H}(x)=1/\sqrt{N}$ after Hadamard operation. 
\end{proof}

\begin{lemma}  \label{lem:2}
In solving $\min_{x\in\Omega} f(x)$,  the solution sampled based on the amplitude $\psi_{QW}(x)$ resulted from quantum walk operator has a smaller variance than the one based on the amplitude $\psi_{H}(x)$ resulted from Hadamard operator, i.e. $\mathbb{E}_{\psi_{QW}^2}[(f-\mathbb{E}_{\psi_{QW}^2}[f])^2] < \mathbb{E}_{\psi_{H}^2}[(f-\mathbb{E}_{\psi_{H}^2}[f])^2]$.
\end{lemma}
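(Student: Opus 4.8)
The plan is to embed both sampling distributions in a single one-parameter Gibbs family and read the comparison of variances off the derivatives of its log-partition function. Writing $w_j := J_{(j-K)}^2(b\tau/\Delta^2)\ge 0$ for the (fixed) Bessel weights and $f_j := f(j)$, I would define $Z(\beta) := \sum_{j=1}^{N} w_j e^{-\beta f_j}$ and the tilted distribution $p_\beta(j) := w_j e^{-\beta f_j}/Z(\beta)$, so that $\beta = 2\tau$ reproduces the quantum-walk distribution $\psi_{QW}^2$ of Eq.(\ref{eq:probStateUpdate}). Denoting $\mu_\beta := \mathbb{E}_{p_\beta}[f]$ and $\sigma_\beta^2 := \mathbb{E}_{p_\beta}[(f-\mu_\beta)^2]$, the task reduces to comparing $\sigma_{2\tau}^2$ with the Hadamard variance, which sits at the untilted end of the family.

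First I would record the two standard cumulant identities obtained by differentiating $\log Z$. Since $\frac{d}{d\beta}\mu_\beta = -\sigma_\beta^2 \le 0$, the mean is nonincreasing in the tilt strength, which already yields a clean rigorous version of Lemma \ref{lem:1}. Differentiating once more gives
\begin{equation*}
\frac{d}{d\beta}\sigma_\beta^2 = -\,\mathbb{E}_{p_\beta}\!\big[(f-\mu_\beta)^3\big],
\end{equation*}
so the variance decreases exactly when the tilted distribution of objective values is right-skewed. This is the generic situation in a minimization landscape: a global minimum sitting well below a heavy upper tail of objective values produces a positive third central moment, so $\sigma_\beta^2$ is monotonically decreasing along $\beta \in [0,2\tau]$ and in particular $\sigma_{2\tau}^2$ lies below its value at $\beta = 0$.

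The hard part will be twofold. The sign of the third central moment is not universal, so to make the monotonicity rigorous I would either impose (and interpret) a positive-skewness hypothesis on $\{f_j\}$ valid throughout $\beta \in [0,2\tau]$, or restrict to moderate $\tau$ and use the first-order expansion $\sigma_{2\tau}^2 \approx \sigma_0^2 - 2\tau\,\mathbb{E}_{p_0}[(f-\mu_0)^3]$, which makes the required condition transparent. The second, more delicate gap is that the Hadamard distribution corresponds to uniform weights $w_j \equiv 1$, whereas the family above is built on the Bessel weights $w_j = J_{(j-K)}^2$; strictly, the $\beta$-monotonicity compares $\psi_{QW}^2$ against the base distribution $p_0 \propto J_{(j-K)}^2$, not against the uniform $\psi_H^2$. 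To close this I would bridge the two base measures by the same concentration reasoning used for Lemma \ref{lem:1}, arguing that the oscillatory, norm-preserving Bessel modulation ($\sum_n J_n^2 = 1$) acts as a balanced spatial reweighting that does not by itself inflate the spread above the uniform baseline, so that the exponential tilt governs the comparison. Making that bridging step fully rigorous, rather than heuristic, is where I expect the main difficulty to lie.
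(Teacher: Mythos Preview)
Your approach is genuinely different from the paper's. The paper does not use an exponential-tilt family or cumulant derivatives; instead it partitions $\Omega$ into four regions according to where $f$ sits relative to the two means $\mathbb{E}_{\psi_{QW}^2}[f]$ and $\mathbb{E}_{\psi_{H}^2}[f]$, compares $(f-\mathbb{E}_{\psi_{QW}^2}[f])^2$ with $(f-\mathbb{E}_{\psi_{H}^2}[f])^2$ region by region, and then sums the restricted expectations. Your Gibbs-family argument is more systematic and has the virtue of isolating the precise obstruction---the sign of the third central moment---whereas the paper's partition argument is shorter but buries the same obstruction inside pointwise inequalities between sub-sums that are asserted rather than verified.

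Both gaps you flag are real, and the paper's proof does not close them either. The skewness condition is genuinely needed: an exponential tilt toward low values does not reduce variance in general (take $f$ supported on $\{0,1\}$ with a small mass at $0$; a mild tilt pushes the distribution toward $1/2$--$1/2$ and \emph{increases} the variance). The base-measure mismatch you identify---that $\beta=0$ in your family gives $p_0\propto J_{(j-K)}^2$ rather than the uniform $\psi_H^2$---is also real and cannot be disposed of by $\sum_n J_n^2=1$ alone, since the Bessel weights can be highly nonuniform over a finite window of $N$ states. The paper's region-by-region inequalities implicitly rely on the same monotone reweighting your argument makes explicit, so your proposal is at least as rigorous as the paper's and considerably more candid about where extra hypotheses on $f$ (or on $\tau$) are required for the conclusion to hold.
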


\begin{proof}
From Lemma \ref{lem:1}, we know $\mathbb{E}_{\psi_{QW}^2}[f]<\mathbb{E}_{\psi_{H}^2}[f]$. The state or solution space is divided into the following four subspaces: $\Omega_1=\{x| f \leq \mathbb{E}_{\psi_{QW}^2}[f]\}$,  $\Omega_2=\{x|\mathbb{E}_{\psi_{QW}^2}[f] < f \leq \mathbb{E}_{\psi_{H}^2}[f], |f-\mathbb{E}_{\psi_{QW}^2}[f]| \leq |f-\mathbb{E}_{\psi_{H}^2}[f]|\}$,  $\Omega_3= \{x|\mathbb{E}_{\psi_{QW}^2}[f] < f \leq \mathbb{E}_{\psi_{H}^2}[f], |f-\mathbb{E}_{\psi_{QW}^2}[f]| > |f-\mathbb{E}_{\psi_{H}^2}[f]|\}$, and $\Omega_4=\{x| f > \mathbb{E}_{\psi_{H}^2}[f]\}$. For subspaces $\Omega_1$ and $\Omega_2$, $(f-\mathbb{E}_{\psi_{QW}^2}[f])^2 \leq (f-\mathbb{E}_{\psi_{H}^2}[f])^2$. If we use $\mathbb{E}_{\psi_{QW}^2}^{(k)}[f]$ to denote the expected value for the $k$-th subspace where the probability values are the same as the original ones within the subspace and are zeros outside the subspace, then $\mathbb{E}_{\psi_{QW}^2}^{(1 \cup 2)}[(f-\mathbb{E}_{\psi_{QW}^2}[f])^2] < \mathbb{E}_{\psi_{H}^2}^{(1 \cup 2)}[(f-\mathbb{E}_{\psi_{QW}^2}[f])^2] \leq \mathbb{E}_{\psi_{H}^2}^{(1 \cup 2)}[(f-\mathbb{E}_{\psi_{H}^2}[f])^2]$.  For subspaces $\Omega_3$ and $\Omega_4$, $(f-\mathbb{E}_{\psi_{QW}^2}[f])^2 > (f-\mathbb{E}_{\psi_{H}^2}[f])^2$. Then $\mathbb{E}_{\psi_{H}^2}^{(3 \cup 4)}[(f-\mathbb{E}_{\psi_{QW}^2}[f])^2] > \mathbb{E}_{\psi_{QW}^2}^{(3 \cup 4)}[(f-\mathbb{E}_{\psi_{QW}^2}[f])^2] \geq \mathbb{E}_{\psi_{QW}^2}^{(3 \cup 4)}[(f-\mathbb{E}_{\psi_{QW}^2}[f])^2]$. The original expectations are $\mathbb{E}_{\psi_{H}^2} = \mathbb{E}_{\psi_{H}^2}^{(1 \cup 2)} + \mathbb{E}_{\psi_{H}^2}^{(3 \cup 4)}$ and $\mathbb{E}_{\psi_{QW}^2} = \mathbb{E}_{\psi_{QW}^2}^{(1 \cup 2)} + \mathbb{E}_{\psi_{QW}^2}^{(3 \cup 4)}$. Therefore, $\mathbb{E}_{\psi_{QW}^2}[(f-\mathbb{E}_{\psi_{QW}^2}[f])^2] < \mathbb{E}_{\psi_{H}^2}[(f-\mathbb{E}_{\psi_{H}^2}[f])^2]$.
\end{proof}

\begin{theorem}
In searching the global optimum in the solution space $\Omega$, sampling based on $\psi_{QW}^2(x)$ after one iteration of quantum walk provides a better solution than the one based on $\psi_{H}^2(x)$ after Hadamard operation.
\end{theorem}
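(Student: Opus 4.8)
The plan is to prove this theorem as an immediate consequence of Lemmas~\ref{lem:1} and~\ref{lem:2}, since together they pin down the first two moments of the sampled functional value under each of the two distributions. First I would fix a precise meaning for the phrase ``provides a better solution.'' Because the algorithm draws a single state $x$ and uses $f(x)$ to update the threshold $c$, the object of interest is how small $f(x)$ tends to be in one draw. I interpret ``better'' as the conjunction of two properties: the sampled value has a strictly lower expectation, and it is more sharply concentrated about that lower mean. Lemma~\ref{lem:1} supplies the first, $\mathbb{E}_{\psi_{QW}^2}[f]<\mathbb{E}_{\psi_{H}^2}[f]$, and Lemma~\ref{lem:2} supplies the second, a strictly smaller variance.

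To convert these two moment inequalities into a genuine single-sample guarantee, I would write $\mu_{QW}=\mathbb{E}_{\psi_{QW}^2}[f]$ with variance $\sigma_{QW}^2$, and likewise $\mu_H,\sigma_H^2$ for the Hadamard case, so that the lemmas read $\mu_{QW}<\mu_H$ and $\sigma_{QW}^2<\sigma_H^2$. Then I would invoke Chebyshev's inequality to bound, for any deviation level $\kappa>0$,
\begin{equation*}
\Pr_{\psi_{QW}^2}\!\bigl[f(x)\geq \mu_{QW}+\kappa\sigma_{QW}\bigr]\leq \frac{1}{\kappa^2}.
\end{equation*}
A quantum-walk sample therefore lies, with controlled probability, inside a band whose center $\mu_{QW}$ is already below $\mu_H$ and whose half-width $\kappa\sigma_{QW}$ is narrower than the corresponding Hadamard band. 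Since both the location and the spread of the distribution are pushed toward smaller functional values, the probability mass sitting at low $f$ is larger for the quantum walk than for the uniform Hadamard sampling, which is exactly the qualitative statement the theorem asserts.

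The main obstacle I anticipate is conceptual rather than computational. Lower mean together with lower variance does \emph{not} by itself yield first-order stochastic dominance: two distributions can have a smaller mean and a smaller variance yet still have crossing cumulative distribution functions, so the clean pointwise statement $\Pr_{\psi_{QW}^2}[f<c]\geq \Pr_{\psi_{H}^2}[f<c]$ for every threshold $c$ need not follow from the two lemmas alone. I would therefore be careful to state the conclusion only at the level the lemmas actually support---strictly lower expected sampled value with sharper concentration about that value, quantified by the Chebyshev bound above---rather than overclaiming dominance. If a stronger pointwise comparison of the two CDFs were wanted, I expect I would have to exploit the explicit structure of Eq.(\ref{eq:probStateUpdate}), in particular the monotone modulation factor $e^{-2f(j)\tau}$, which reweights the amplitude toward low-$f$ states and would let me compare the tail probabilities of the two distributions directly.
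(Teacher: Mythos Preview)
Your approach matches the paper's: the proof there simply invokes Lemma~\ref{lem:1} (smaller expected value) and Lemma~\ref{lem:2} (smaller variance) and declares the result, with one additional sentence noting that the maximization case is analogous. Your Chebyshev step and your explicit caveat that lower mean plus lower variance does not imply first-order stochastic dominance are not in the paper; the paper's argument stops exactly where you feared it might, so your version is in fact more careful than the original.
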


\begin{proof}
For minimization problems, Lemma \ref{lem:1} shows that the expected value from samplings based on $\psi_{QW}^2(x)$ is less than the one based on $\psi_{H}^2(x)$. Lemma \ref{lem:2} shows that the variance of samples based on $\psi_{QW}^2(x)$ is also less than the one based on $\psi_{H}^2(x)$. The similar proof can be obtained for maximization problems.
\end{proof}

\begin{theorem}
In searching the global optimum in the solution space $\Omega$, sampling based on $\psi_{QW}^2(x)$ after one iteration of quantum walk results in a threshold value that is better than the one based on $\psi_{G}^2(x)$ after one iteration of Grover operation, if the quantum walk successfully locates the basin of global optimum.
\end{theorem}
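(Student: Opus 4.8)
The plan is to mirror the structure of the preceding theorem (quantum walk versus Hadamard) and reduce the Grover comparison to it by exploiting the fact that a single Grover rotation is \emph{blind} within the set of marked states. First I would recall the standard analysis of the Grover iteration: starting from the uniform Hadamard state and marking the $m$ states with $f(j)<c$, after one rotation all marked states share a common amplitude and all unmarked states share another common amplitude. Consequently $\psi_{G}^{2}(j)$ is constant over the marked set $M=\{j:f(j)<c\}$, with total mass $p_{1}=\sin^{2}[3\arcsin\sqrt{m/N}]$, and constant over its complement, with total mass $1-p_{1}$ carried entirely by states of value $\geq c$. In particular, conditioned on sampling a marked state, $\psi_{G}^{2}$ coincides with a uniform (Hadamard-type) distribution restricted to $M$.

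Next I would decompose the Grover expectation as
\begin{equation*}
\mathbb{E}_{\psi_{G}^{2}}[f]=p_{1}\bar{f}_{M}+(1-p_{1})\bar{f}_{U},
\end{equation*}
where $\bar{f}_{M}$ and $\bar{f}_{U}$ are the means of $f$ over $M$ and over its complement, and $\bar{f}_{U}\geq c\geq\bar{f}_{M}$, so that $\mathbb{E}_{\psi_{G}^{2}}[f]\geq\bar{f}_{M}$. I would then invoke the hypothesis that the walk locates the basin $B^{*}$ of the global optimum, made precise by requiring the Bessel envelope $J_{(j-K)}^{2}(b\tau/\Delta^{2})$ to be supported on $B^{*}\subseteq M$. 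On $B^{*}$ the modulation $e^{-2V_{j}\tau}$ in Eq.(\ref{eq:probStateUpdate}) reweights probability toward the lowest values exactly as in Lemma \ref{lem:1}, so the quantum-walk mean lies strictly below the uniform mean over $B^{*}$; and since $B^{*}$ contains the global minimum, its uniform mean is at most $\bar{f}_{M}$. Chaining these inequalities yields $\mathbb{E}_{\psi_{QW}^{2}}[f]<\bar{f}_{M}\leq\mathbb{E}_{\psi_{G}^{2}}[f]$.

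To upgrade the expected-value comparison into a statement about the sampled threshold, I would repeat the four-subspace partition of Lemma \ref{lem:2} with $\psi_{G}^{2}$ playing the role previously played by $\psi_{H}^{2}$. The $e^{-2V_{j}\tau}$ concentration together with the basin localization gives $\psi_{QW}^{2}$ the smaller variance, while the leakage mass $1-p_{1}$ that Grover deposits on high-value unmarked states only inflates its spread. With a smaller mean and a smaller variance, the same argument as in the preceding theorem shows that a sample drawn from $\psi_{QW}^{2}$ improves the threshold $c$ with higher probability than one drawn from $\psi_{G}^{2}$.

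The hard part will be the support bookkeeping hidden in the localization hypothesis: one must argue both that $B^{*}\subseteq M$ and that the uniform mean over $B^{*}$ does not exceed $\bar{f}_{M}$, i.e.\ that the basin of the global optimum genuinely carries lower values than the average marked state. This is precisely where the assumption that the quantum walk successfully locates the basin of global optimum is indispensable, since a poorly chosen initial index $K$ could localize the Bessel envelope in a high-value region and reverse the inequality. I would also record the edge case of small $m/N$, where $p_{1}$ is small; this only strengthens the conclusion, because Grover then leaks most of its mass onto states of value $\geq c$ while the quantum walk remains concentrated near the global minimum.
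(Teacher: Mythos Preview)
Your plan is considerably more elaborate than what the paper actually does. The paper does not handle general $m$: it immediately specializes to $m=N/4$, where a single Grover iteration amplifies the marked set to probability one, so that $\psi_G^2(x)=4/|\Omega|$ on $\{x:f(x)\le c\}$ and vanishes elsewhere. The paper's proof literally stops at that observation; the implicit conclusion is that $\psi_G^2$ is then a uniform (Hadamard-type) distribution on the sublevel set $M=\{x:f(x)\le c\}$, so the comparison reduces to Lemma~\ref{lem:1} applied on $M$ rather than on all of $\Omega$, with the localization hypothesis ensuring the walk samples inside $M$.

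Your decomposition $\mathbb{E}_{\psi_G^2}[f]=p_1\bar f_M+(1-p_1)\bar f_U$ is the natural generalization to arbitrary $m$, and the gap you flag --- that the uniform mean over the basin $B^*$ need not lie below $\bar f_M$ merely because $B^*$ contains the global minimizer --- is real and is not addressed by the paper either. The paper simply sidesteps it: with $p_1=1$ there is no leakage term and no need to compare a proper subset $B^*\subsetneq M$ against $M$, since Lemma~\ref{lem:1} can be invoked directly on $M$. If you are content with the paper's level of generality, the $m=N/4$ shortcut is all that is required; if you want a statement for arbitrary $m$, the structural assumption you isolated is exactly what is missing, and the paper offers no help there.
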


\begin{proof}
It is well known that one iteration of Grover operation can locate one of $m$ solutions with the probability of one if $m=N/4$ where $N$ is the size of discrete solution space \cite{boyer1998tight}. In this case, for the minimization problem $\min_{x\in\Omega} f(x)$, $|\{x|f(x)\leq c\}|=|\Omega|/4$. One iteration of Grover operation results in $\psi_{G}^2(x)=4/|\Omega|$ for subspace $\{x|f(x)\leq c\}$ and $\psi_{G}^2(x)=0$ for subspace $\{x|f(x)>c\}$. 
\end{proof}

\subsection{Evaluation of objective functions on quantum computer}

In the original Grover search problem, a function $f$ is evaluated for an input $x$ with a binary output, either $f(x)=0$ or $f(x)=1$, as a quantum oracle or black box where $1$ indicates that $x$ is a solution and $0$ otherwise. For optimization, the complexity of the functional evaluation depends on the number of qubits. For instance, the objective function can be evaluated as $N$ of such black boxes $f_{1}(x), \ldots, f_{N}(x)$ if $N$ qubits are available in the computer. Each one of these boxes is implemented such that its output is the qubit corresponding to the evaluation. The concatenation of them represents the actual value of the objective function for the input. As a simple illustration, the evaluation of $e^{-x}$ on a three-qubit machine with input $0\le x<1$ can be implemented with three black boxes with input coded as $x=0.x_1x_2x_3$ and output $0.f_1f_2f_3$, where $f_1(x_1,x_2,x_3)=1-x_1\wedge x_2\wedge x_3$, $f_2(x_1,x_2,x_3)=[(1-x_1)\wedge(1-x_2)] \vee [(1-x_1)\wedge x_2 \wedge (1-x_3)] \vee [x_1\wedge x_2\wedge x_3]$, and $f_3(x_1,x_2,x_3)=[(1-x_1)\wedge((1-x_2)\vee(x_2 \wedge x_3))] \vee [x_1\wedge(1-x_2)\wedge(1-x_3)] \vee [x_1 \wedge x_2 \wedge x_3]$. The evaluation can be done coherently. 

\section{Implementation and numerical experiments}  \label{sec:implementation}

Both the new algorithm denoted by BBW-QW and the BBW algorithm \cite{baritompa2005grover,liu2010using} are implemented in a quantum computer emulator written in python. Experiments are conducted by several test functions, including Rastrigin $f(x)=10+x^{2}-10\cos(2\pi x)$, Schwefel $f(x)=-4.189829+30x\sin(\sqrt{|30x|})$, and Ackley $f(x)=-20\exp(-0.2|4x|)-\exp(\cos(2\pi x))$. All of the functions are challenging for local search because of the multiple steep wells of local optimums and have been widely used as benchmarks in global optimization. 

For our experiments, 9 qubits are taken to represent the discrete solution space. The corresponding spatial resolution is $\Delta=(x_{U}-x_{L})/2^9$ given the lower bound $x_L$ and upper bound $x_U$ of the considered solution range. We also implemented the BBW-LK algorithm \cite{baritompa2005grover,liu2010using}, with both dynamic and static rotation strategies. In the dynamic strategy, the numbers of rotations are calculated at run time to maximize the benefit-cost ratio for each iteration, whereas in the static strategy, the numbers of rotations are fixed as a sequence of values. Our experiments showed that the static rotation strategy actually performs more robust with higher probabilities of success for these benchmark functions used in this paper. Therefore the static strategy is used to compare with the proposed quantum walk based method. 

As shown in Figure \ref{fig:Rastrigin_avg}, the average PDF's for all possible values of $x$ in the solution space over 20 runs of search are compared between the proposed quantum walk grover search algorithm and the BBW algorithm, where the rotation threshold $R_0$ is $2$.  The typical PDF's for only one run of search by the two algorithms are compared in Figure \ref{fig:Rastrigin}. It is seen that the PDF's are flat and close to the uniform distribution for few rotations in the BBW algorithm. In the proposed BBW-QW algorithm, they are replaced by a sharper distribution after one step of quantum walk. The efficiency of the two algorithms is compared in Figure \ref{fig:Rastrigin_comparison_BBW_QW}, where the probabilities of successful samplings with respect to (w.r.t.) the number of iterations and the number of functional evaluations are compared with different values of rotation threshold $R_0$. At the initial stage of search with few iterations, quantum walk provides higher probabilities of success. It is seen that when $R_0=2$, the difference between the BBW-QW and BBW algorithms is the most significant. The benefit of quantum walk is also seen at the later stage of the search. The number of functional evaluations is a better criterion to evaluate the efficiency. Figure \ref{fig:Rastrigin_comparison_BBW_QW}-(b) illustrates the difference between the BBW and BBW-QW algorithms. Figure \ref{fig:Rastrigin_comparison_BBW_QW}-(c) compares the efficiencies of BBW and BBW-QW algorithm when the domain size is increased from $x\in [-5,5]$ to $x\in [-15,15]$. The efficiency of the BBW algorithms slightly decreases at the early search stage as the domain size increases, whereas it does not change much for the BBW-QW algorithm. It is also seen that with about 50 evaluations, both BBW and BBW-QW algorithms increase the probability of success to about 90\%. The difference between the two starts to emerge when more iterations are taken. 

To provide an overall picture of how the quantum search algorithms are compared with traditional global optimization methods, the probabilities of successful search w.r.t. the number of functional evaluations in the BBW, BBW-QW, simulated annealing, and genetic algorithms (GA) for Rastrigin function are compared in Figure \ref{fig:Rastrigin_comparison_BBW_GA}. The results from the genetic algorithms with different population sizes (5, 25, and 50) and simulated annealing with different initial temperature (100 and 1000) are shown. The optimum solution is known at $x=0$. When the distance between a located solution and the known optimum solution is less than a threshold value of $1.0\times 10^{-4}$, the search is regarded as a success. The threshold is chosen to be compatible with the resolution used in the quantum algorithms as a result of the number of available qubits. The number of iterations affect the probability of success. Among the three population sizes, the population size of $25$ is the best. Yet it is still much less efficient than the quantum search algorithms. Similarly, simulated annealing is not as efficient as the quantum search algorithms.

\begin{figure}
\subfigure[average PDF by BBW-QW algorithm ($R_0=2$)] {
  \includegraphics[width=0.5\textwidth]{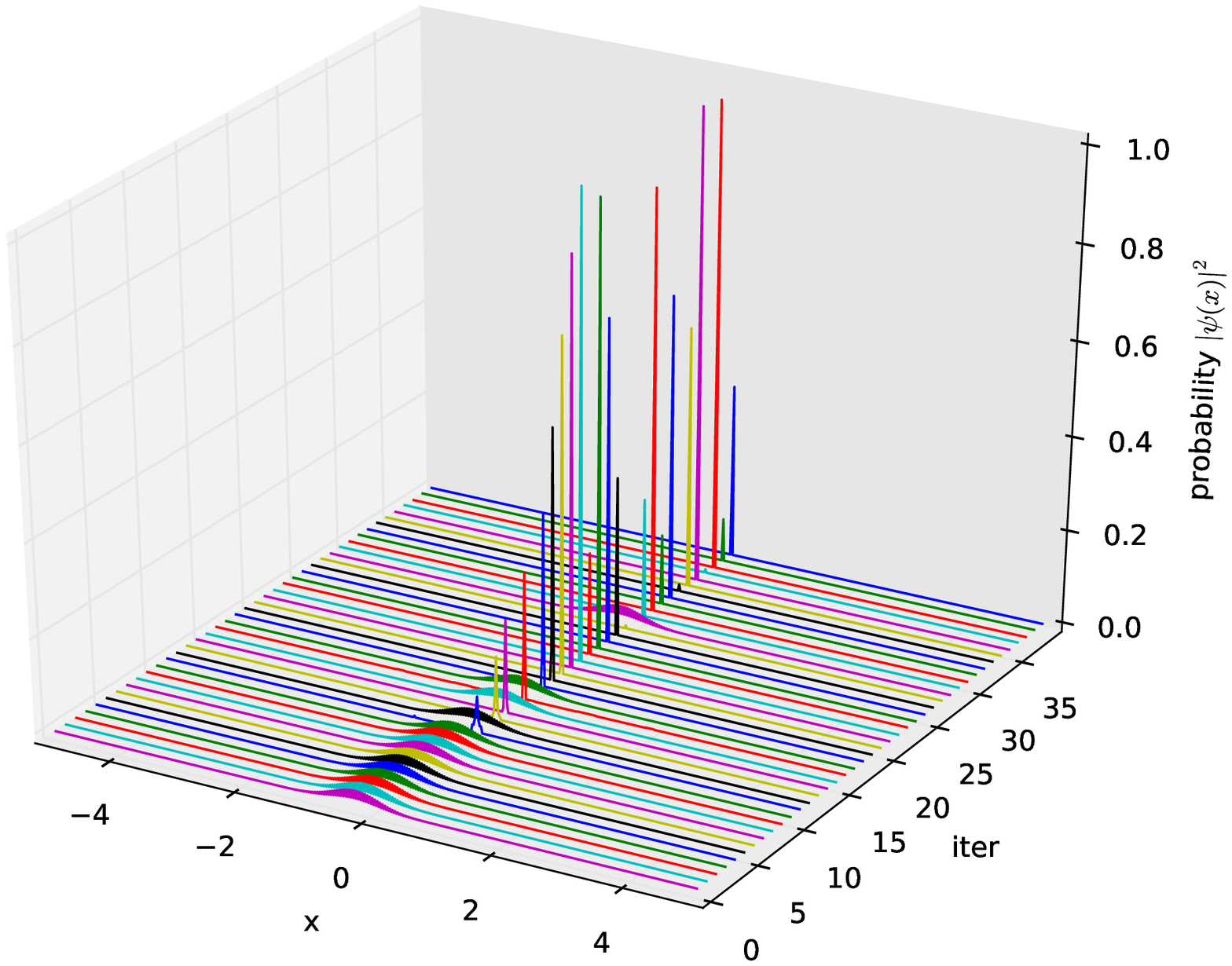} }
\subfigure[average PDF by BBW algorithm] {
  \includegraphics[width=0.5\textwidth]{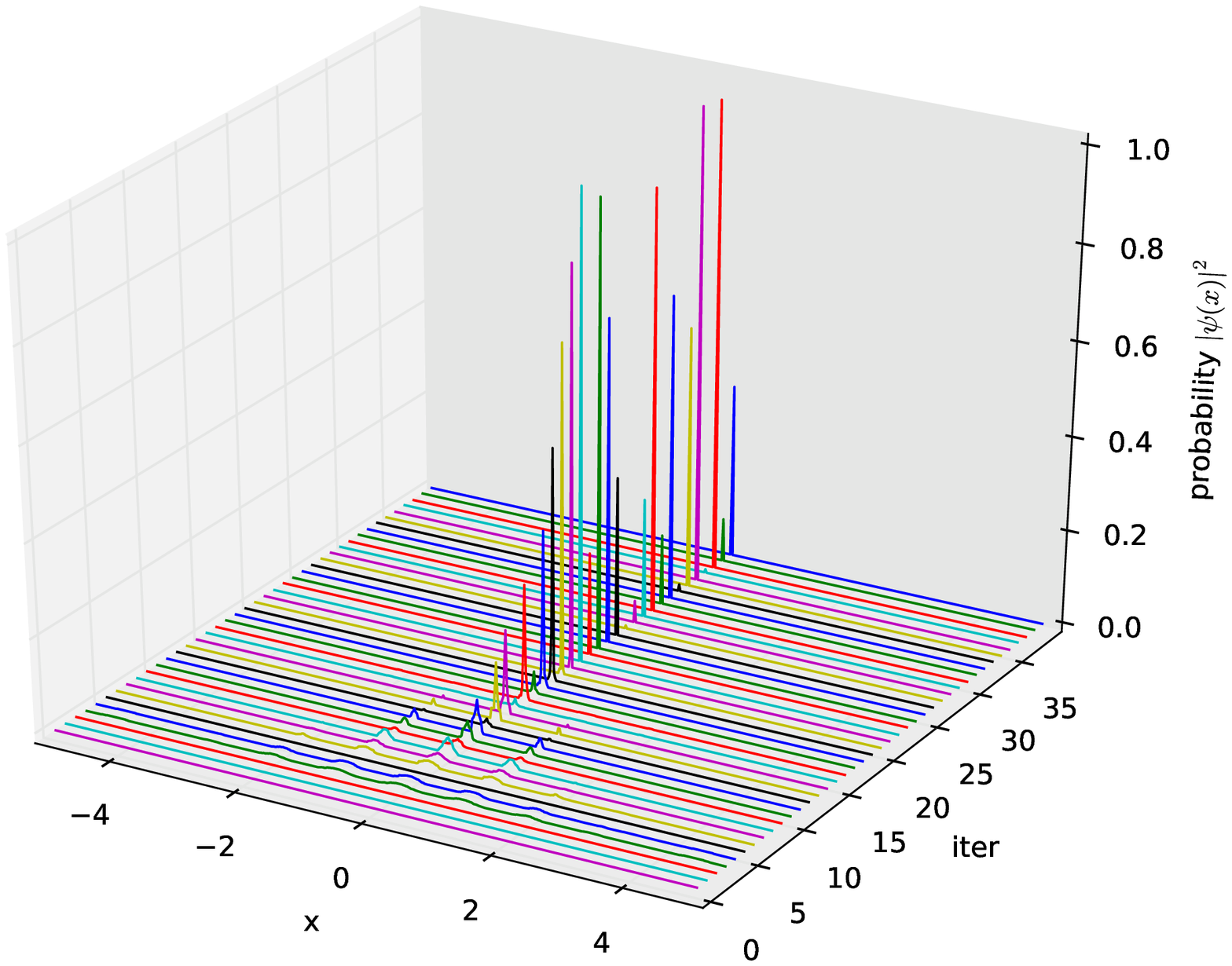} }
\caption{Comparison between average PDF's for Rastrigin function}
\label{fig:Rastrigin_avg}   
\end{figure}

\begin{figure}
\subfigure[typical PDF by BBW-QW algorithm ($R_0=2$)] {
  \includegraphics[width=0.5\textwidth]{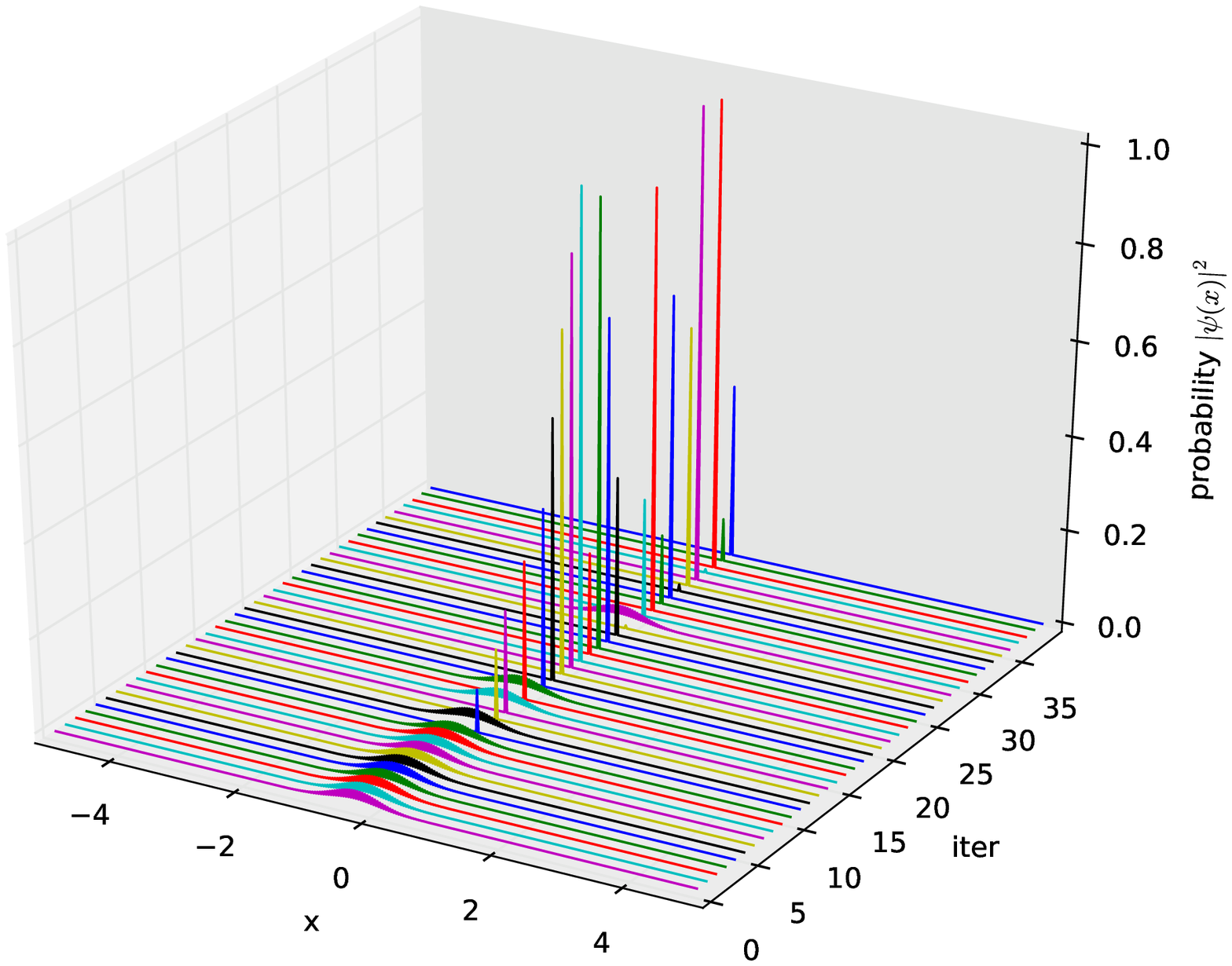} }
\subfigure[typical PDF by BBW algorithm] {
  \includegraphics[width=0.5\textwidth]{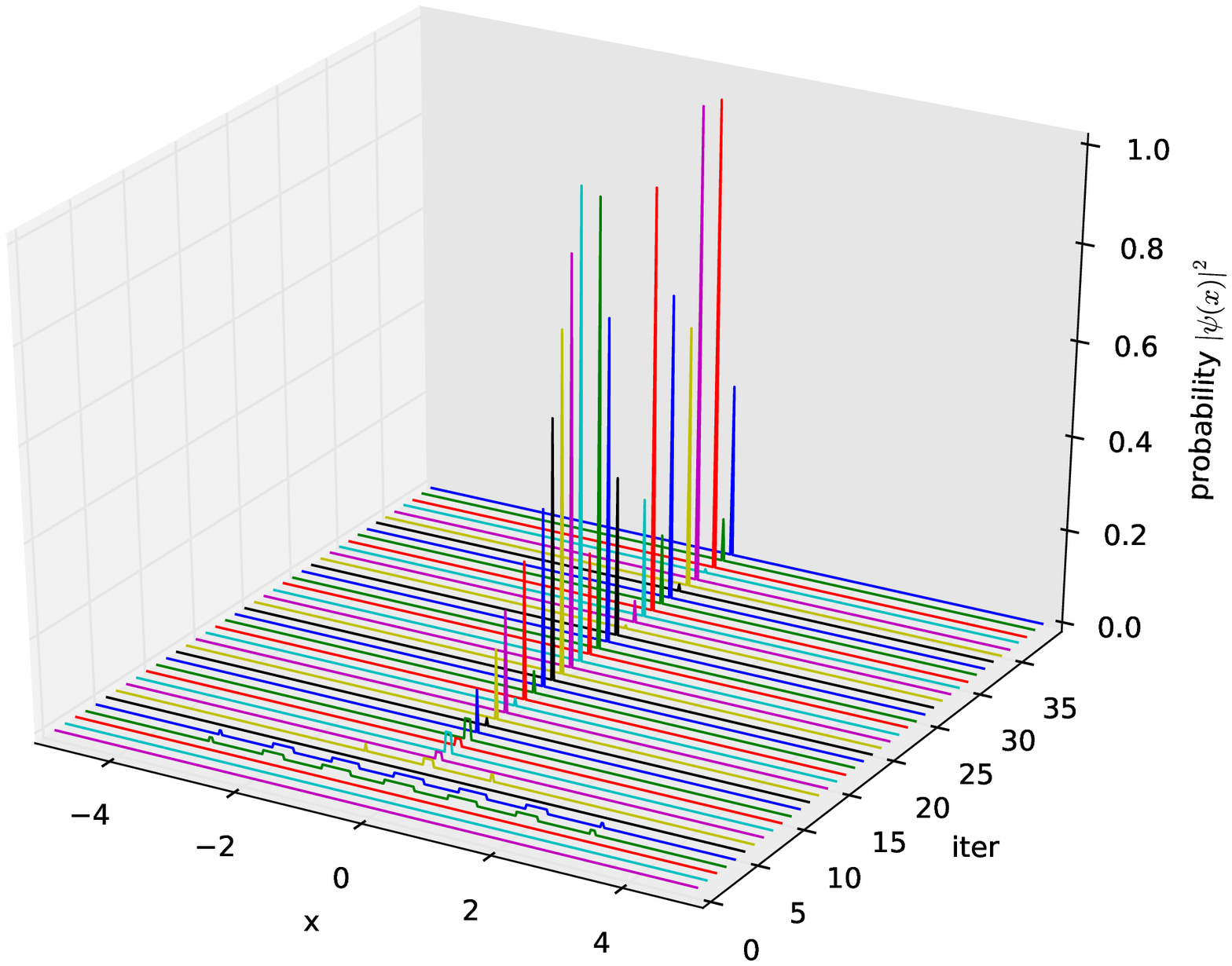} }
\caption{Comparison between typical PDF's for Rastrigin function}
\label{fig:Rastrigin}   
\end{figure}

\begin{figure}
\subfigure[probability of success w.r.t. iterations] {  \includegraphics[width=0.5\textwidth]{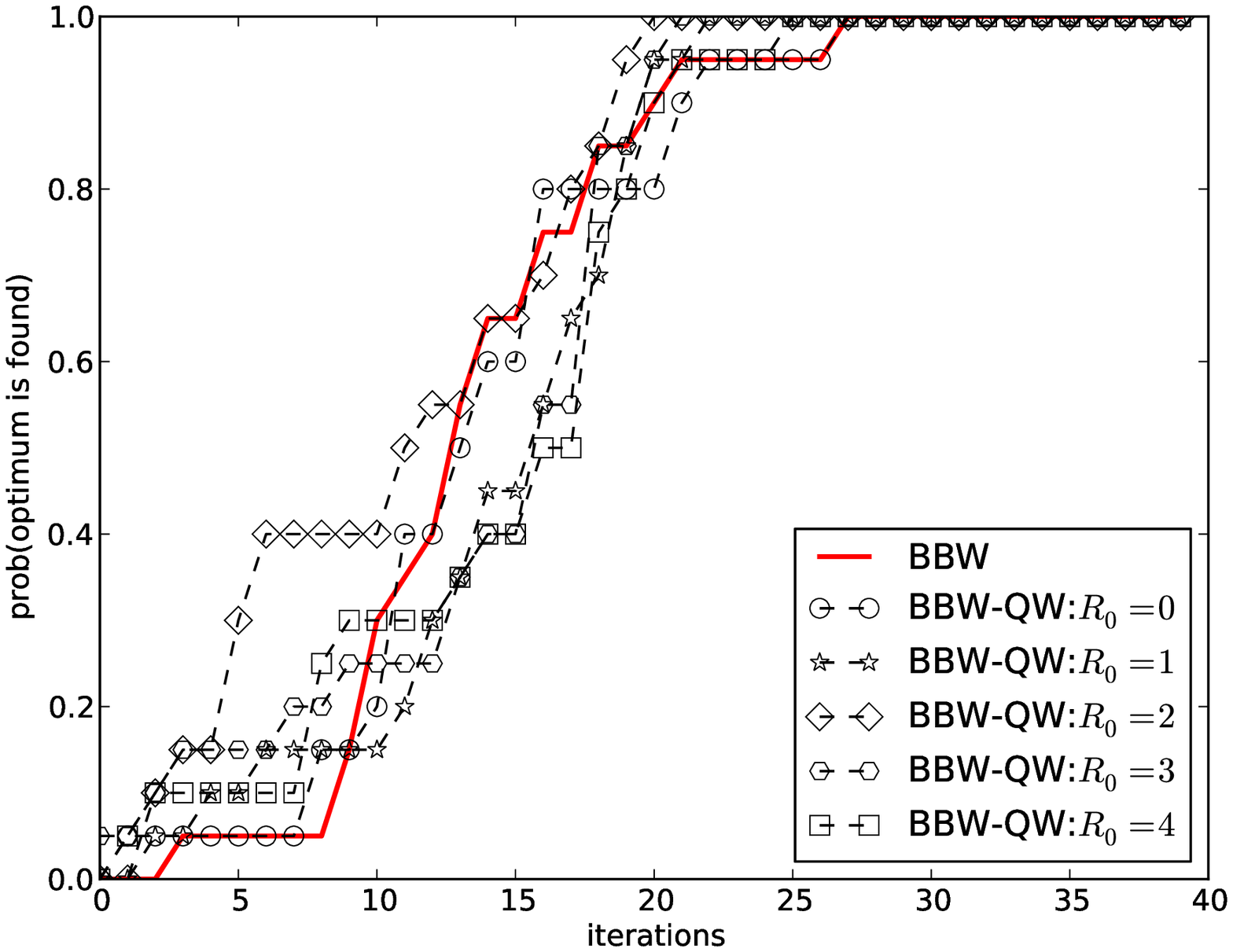} }
\subfigure[probability of success w.r.t. functional evaluations] { \includegraphics[width=0.5\textwidth]{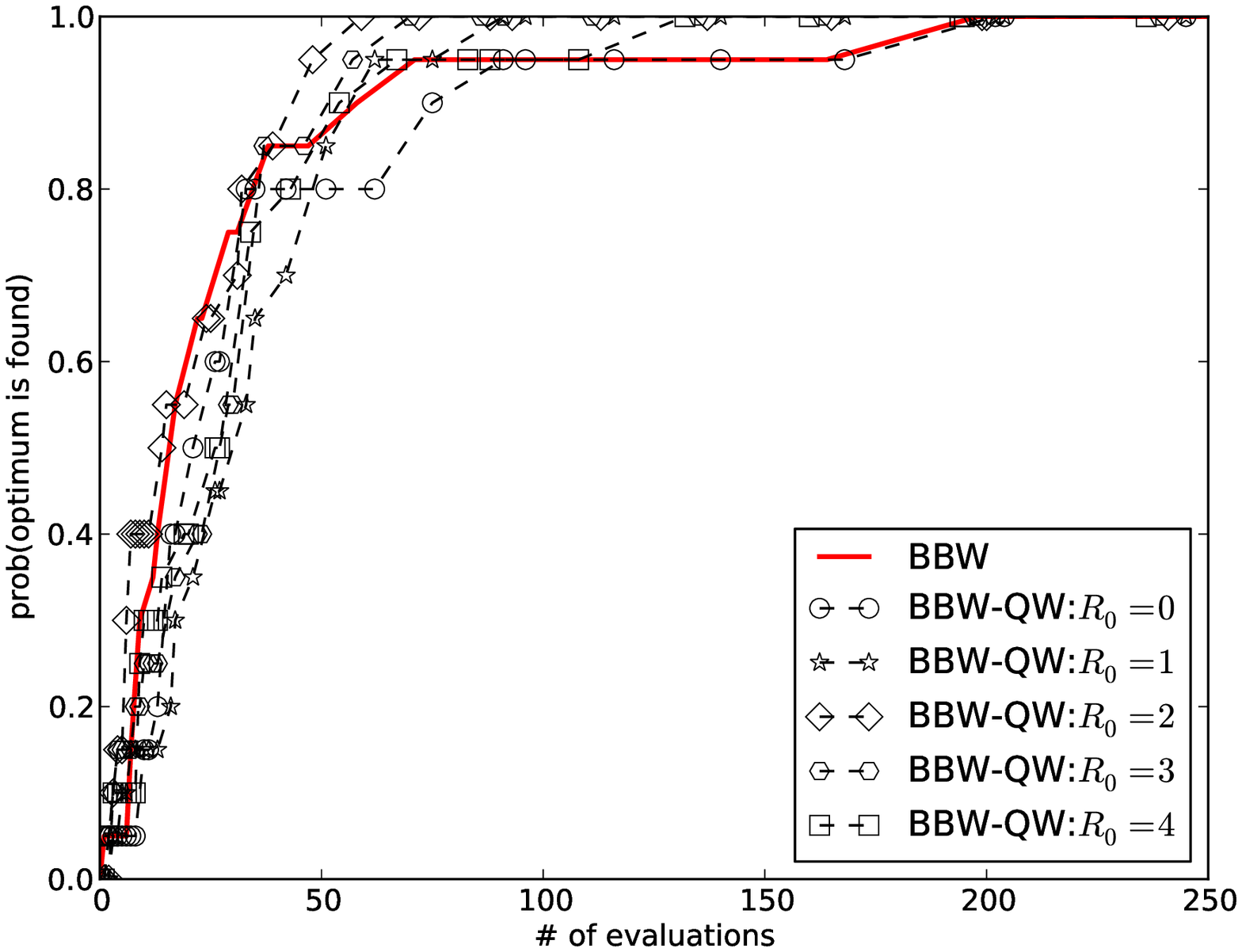} }
\subfigure[the effect of domain size w.r.t. functional evaluations] { \includegraphics[width=0.5\textwidth]{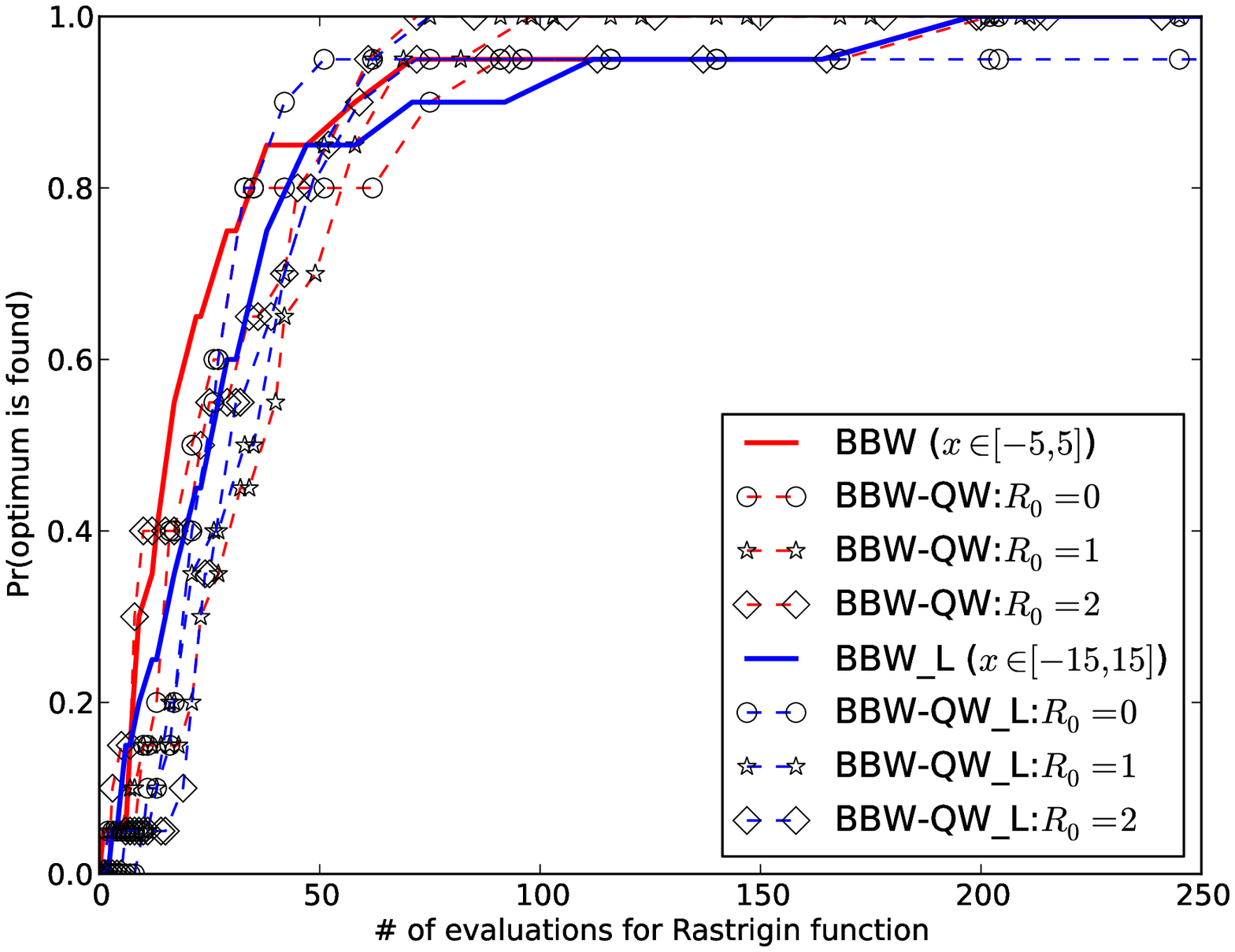} }
\caption{Comparison the efficiency of the BBW and proposed BBW-QW algorithms}
\label{fig:Rastrigin_comparison_BBW_QW}   
\end{figure}

\begin{figure}
\includegraphics[width=0.5\textwidth]{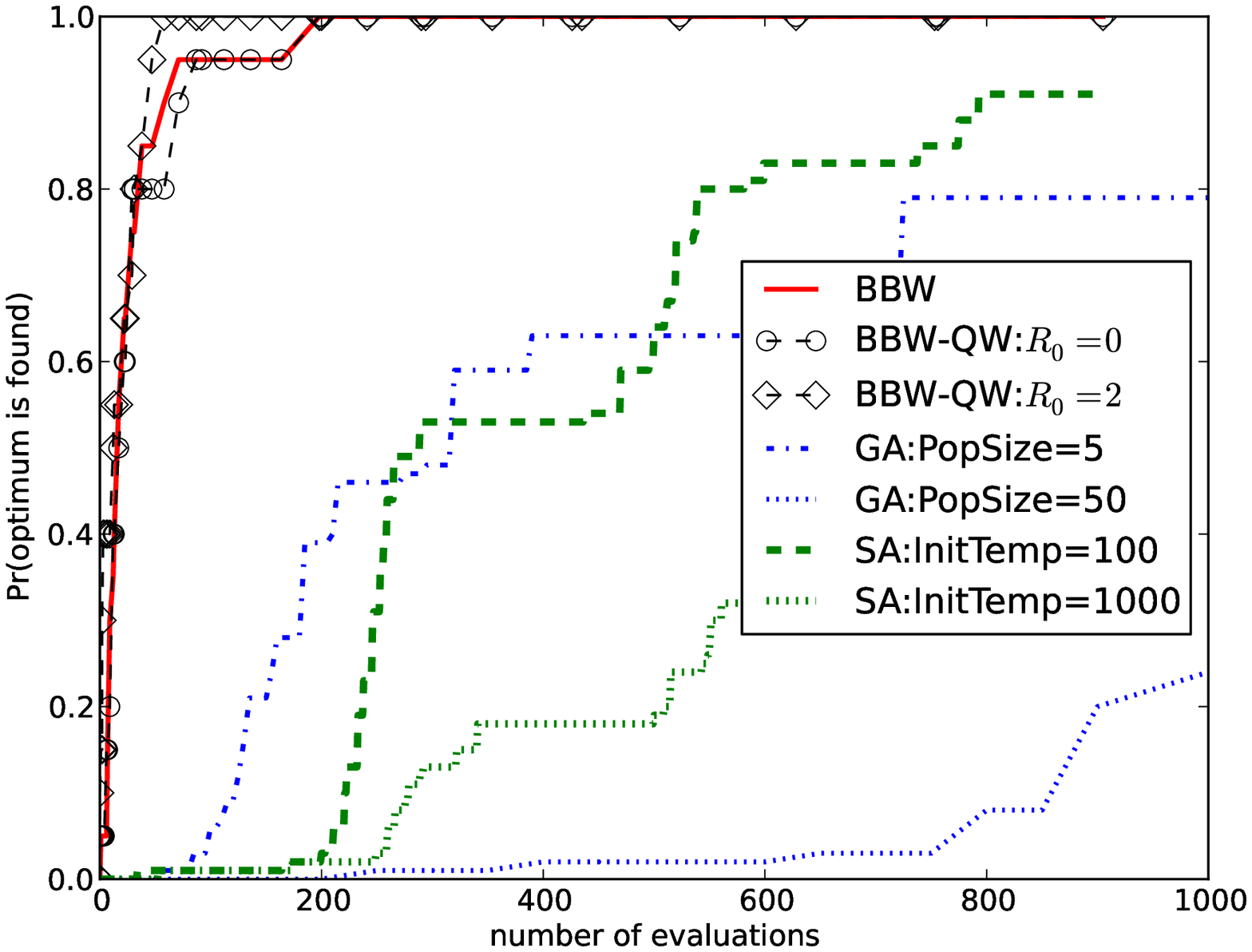} 
\caption{Comparison the efficiency of the BBW, GA, and simulated annealing algorithms for Rastrigin  function}
\label{fig:Rastrigin_comparison_BBW_GA}   
\end{figure}

The results of the BBW and BBW-QW algorithms in searching for Schwefel function are compared in Figure \ref{fig:Schwefel_avg} where $R_0=0$. The efficiencies of the BBW and BBW-QW algorithms with different $R_0$ values are also compared in Figure \ref{fig:Schwefel_comparison_BBW_QW}. It is seen that the quantum search algorithms work more efficiently for Schwefel function than for Rastrigin function. The optimum solution can be found with the probability of one with only few iterations. As a result, the difference between the two algorithms is relatively small.  Figure \ref{fig:Ackley_comparison_BBW_QW} compares the efficiencies of the BBW and BBW-QW algorithms for Ackley function. Similar to Rastrigin function, $R_0=2$ provides an obvious improvement for Schwefel and Ackley functions. It should be noted that the rotational threshold $R_0$ plays a key role of efficiency for the BBW-QW compared to BBW. If $R_0$ is too large, more quantum walks (with additional functional evaluations) are applied during the search, which will decrease the efficiency of the search algorithm. The test results show that a threshold value of $R_0<2$ is good for the test functions. In general, the selection of the value of $R_0$ depends on the complexity of the objective function. If the function has more local optima or wells in the search domain, more quantum walks are necessary, therefore a larger value of $R_0$ needs to be chosen.

\begin{figure}
\subfigure[average PDF by BBW-QW algorithm ($R_0=0$)] {
  \includegraphics[width=0.5\textwidth]{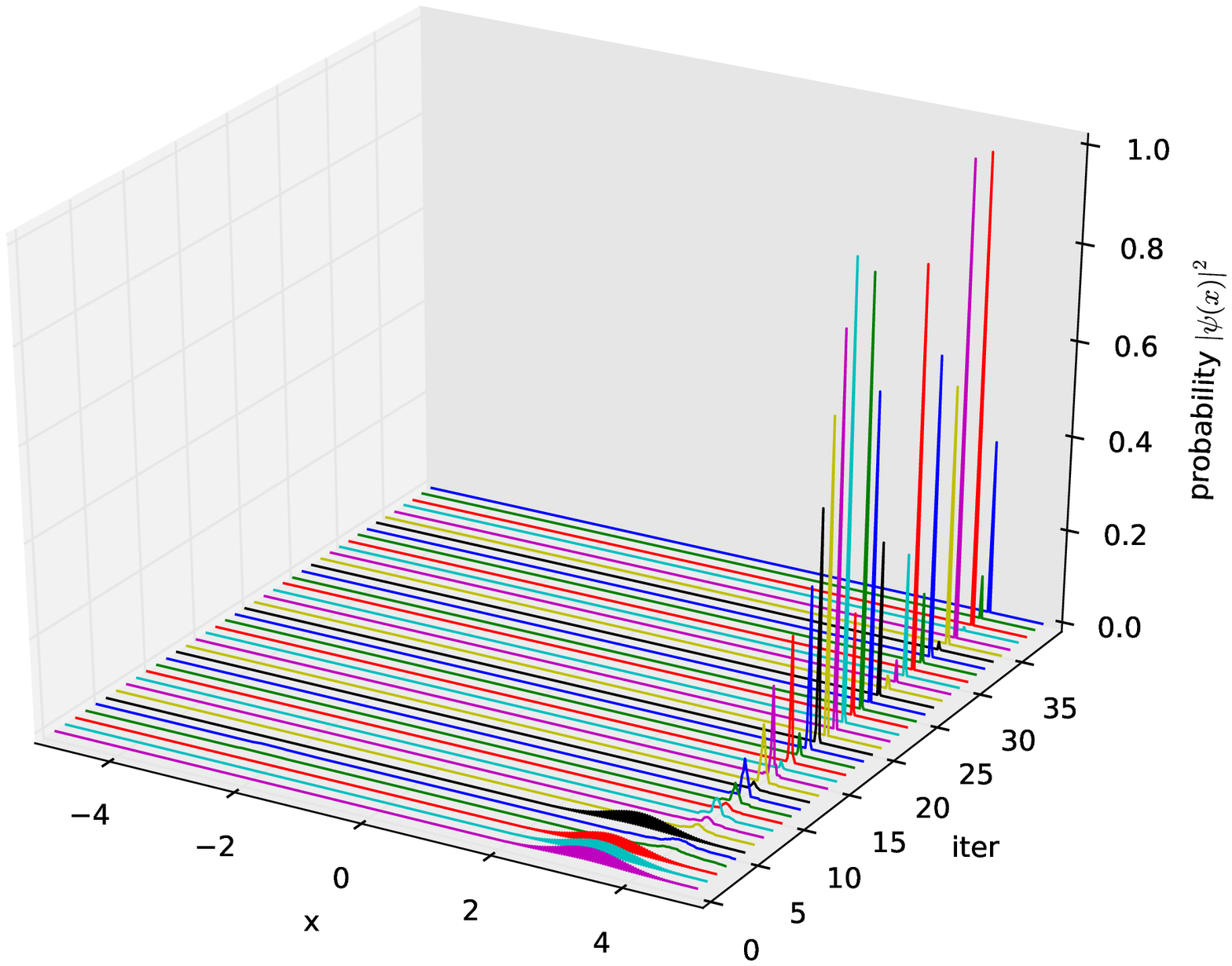} }
\subfigure[average PDF by BBW algorithm] {
  \includegraphics[width=0.5\textwidth]{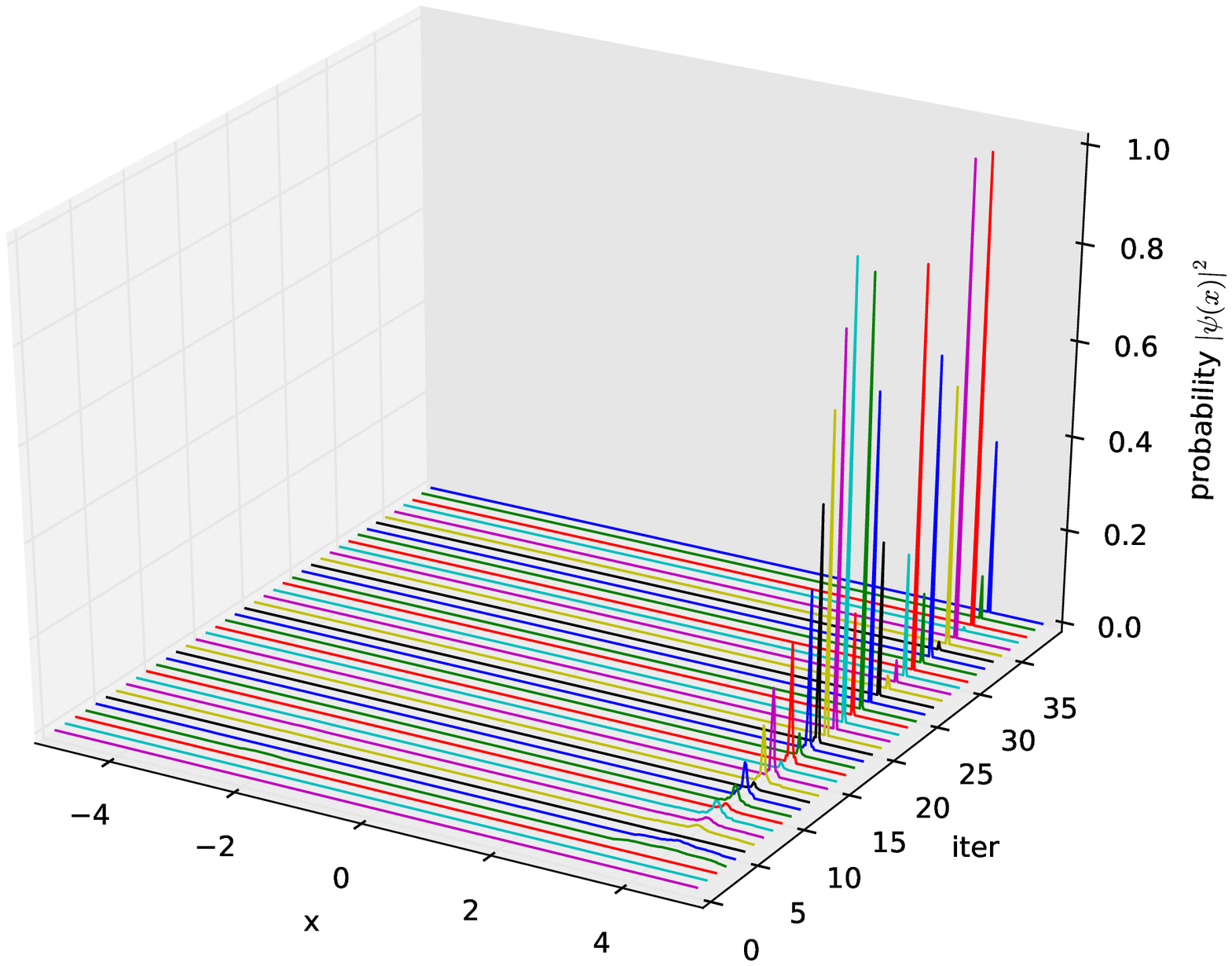} }
\caption{Comparison between average PDF's for Schwefel function}
\label{fig:Schwefel_avg}   
\end{figure}

\begin{figure}
\subfigure[probability of success w.r.t. iterations] {  \includegraphics[width=0.5\textwidth]{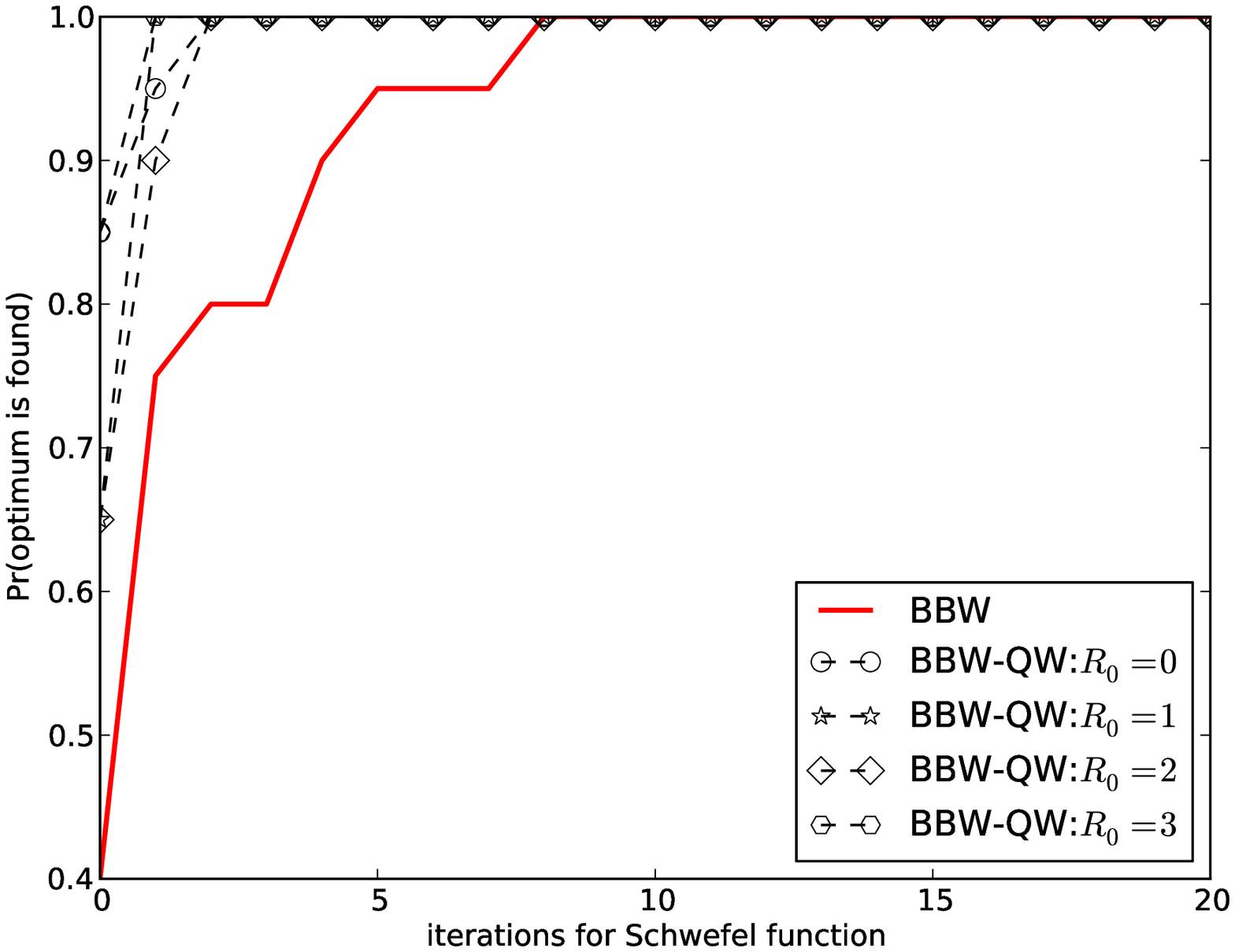} }
\subfigure[probability of success w.r.t. functional evaluations] { \includegraphics[width=0.5\textwidth]{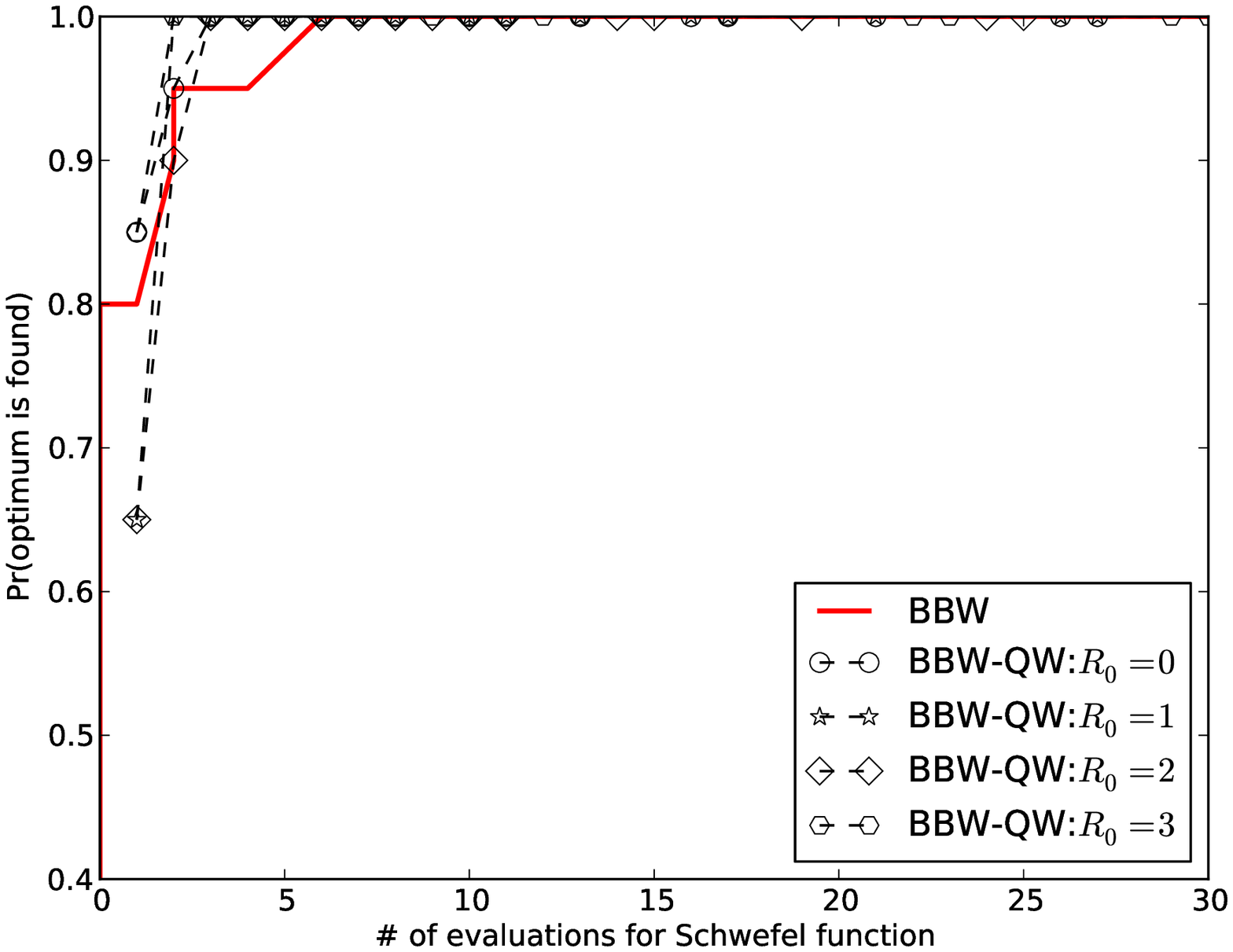} }
\caption{Comparison the efficiency of the BBW and proposed BBW-QW algorithms for Schwefel function}
\label{fig:Schwefel_comparison_BBW_QW}   
\end{figure}

\begin{figure}
\includegraphics[width=0.5\textwidth]{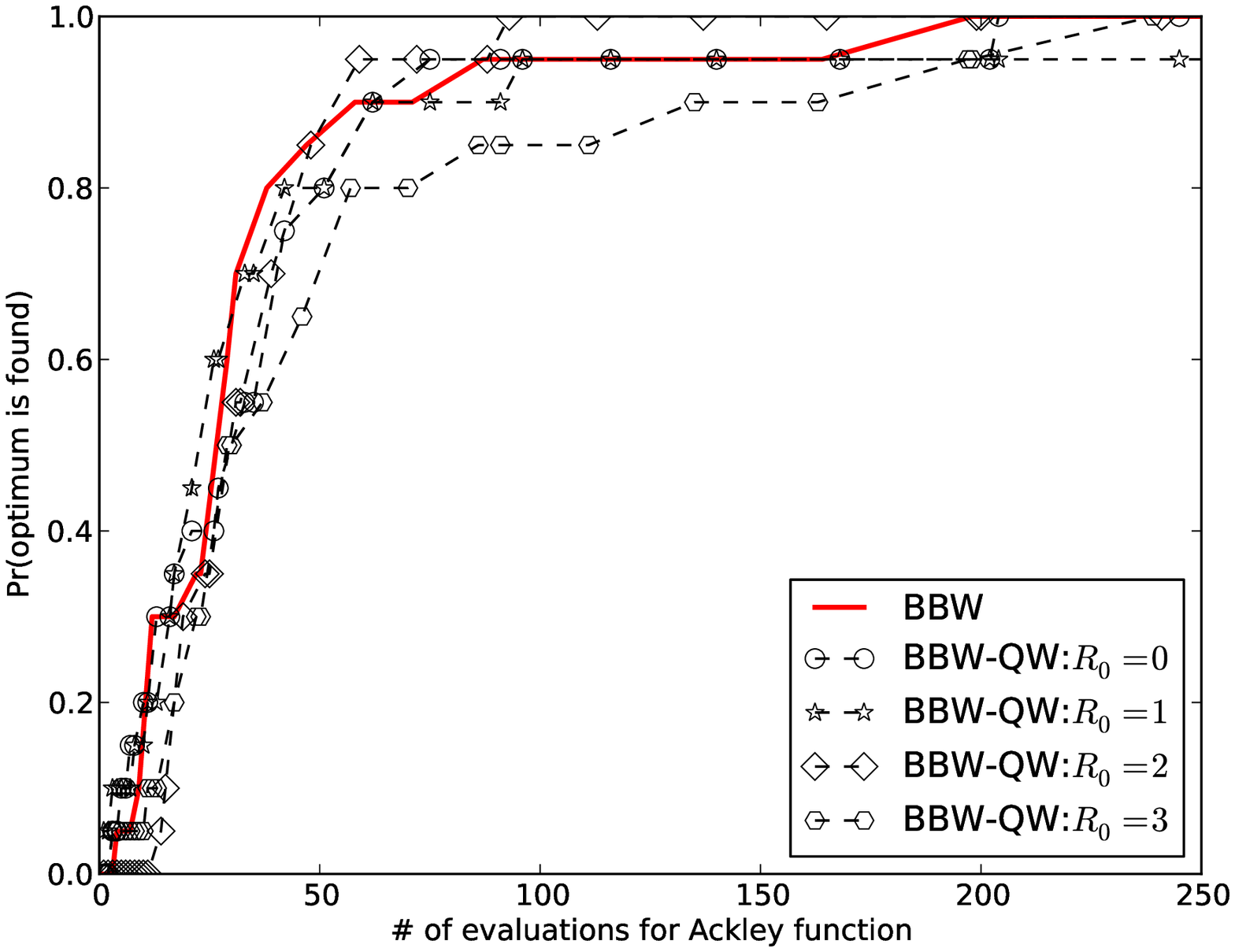} 
\caption{Comparison the efficiency of the BBW and proposed BBW-QW algorithms for Ackley function}
\label{fig:Ackley_comparison_BBW_QW}   
\end{figure}

\section{Concluding remarks}

In this paper, a hybrid approach that combines quantum walks with Grover search to solve global optimization problems is proposed. By taking advantages of quantum tunneling effect, quantum walks can enhance the traditional Grover search algorithm and improve the efficiency of search. The acceleration is achieved by quickly improving the threshold value at the early stage of search so that the solution space can be reduced faster during the Grover search.

Different from existing Grover search algorithms that focus on optimizing the number of Grover rotations only, the new algorithm tries to improve the search efficiency by accelerating the convergence of threshold value  toward the optimum. Nevertheless, as the threshold approaches the optimum value, the number of Grover rotations also increases. Therefore, a balance between the number of rotations and the number of iterations is needed for particular problems or applications. In an actual quantum computation environment, each sampling or measurement after performing a number of Grover rotations will actually destroy the quantum coherence. The amplitudes of the system will turn into one for the measured solution and zeros for all others. For each iteration, the Grover rotation always starts from the uniform distribution. Therefore, there is an overhead when the quantum register is initialized by Hadamard operation for each iteration. Reducing the number of iterations thus can improve the efficiency of computation in general.

\bibliographystyle{siamplain}      % mathematics and physical sciences

%\bibliography{QWOpt_bib}   % name your BibTeX data base

\end{document}